
\documentclass[journal]{IEEEtran}

\IEEEoverridecommandlockouts

\hyphenation{ML}

\usepackage{algorithm}
\usepackage{algorithmic}
\usepackage{amsfonts}
\usepackage{amsthm}
\usepackage{amssymb}
\usepackage{array}
\usepackage{bbm}
\usepackage{bbold}
\usepackage{bm}
\usepackage{booktabs}
\usepackage{caption}
\usepackage{cite}
\usepackage[cmex10]{amsmath}
\usepackage{diagbox}
\usepackage{enumerate}
\usepackage{epstopdf}
\usepackage{footnote}
\usepackage{framed}
\usepackage{graphicx}
\usepackage{indentfirst}
\usepackage{lineno}
\usepackage{multicol}
\usepackage{multirow}
\usepackage{setspace}
\usepackage{subeqnarray}
\usepackage{subfigure}
\usepackage{threeparttable}
\makesavenoteenv{tabular}
\usepackage{url}
\usepackage{xcolor}
\allowdisplaybreaks[4]

\newcommand{\mb}{\mathbf}
\newcommand{\mbb}{\mathbb}
\newcommand{\mc}{\mathcal}

\newcommand{\mr}{\mathrm}
\newcommand{\ti}{\textit}

\newcommand{\ul}{\underline}
\newcommand{\wh}{\widehat}

\newcommand{\tcr}{\textcolor{red}}

\newcommand{\black}{\color{black}}

\newtheorem{theorem}{Theorem}
\newtheorem{lemma}{Lemma}
\newtheorem{proposition}{Proposition}
\newtheorem{corollary}{Corollary}

\newtheorem{example}{Example}
\newtheorem{remark}{Remark}

\newtheorem{claim}{Claim}

\begin{document}

\title{Outer Channel of DNA-Based Data Storage: Capacity and Efficient Coding Schemes}

\author
{
\IEEEauthorblockN{
Xuan~He,
Yi~Ding,
Kui~Cai,
Guanghui~Song,
Bin~Dai,
and~Xiaohu~Tang}

\thanks{This work was presented in part at \cite{	ding2023anefficient}. DOI: 10.1109/ICCCWorkshops57813.2023.10233840}
\thanks{X. He, Y. Ding, B. Dai, and X. Tang are with the Information Coding and Transmission Key Lab of Sichuan Province, Southwest Jiaotong University, China.
K. Cai is with the Science, Mathematics and Technology (SMT) Cluster, Singapore University of Technology and Design, Singapore.
G. Song is with the State Key Lab of Integrated Services Networks, Xidian University, China.}
}

\maketitle

\begin{abstract}
In this paper, we consider the outer channel for DNA-based data storage.
When transmitting over the outer channel, each DNA string is treated as a unit/symbol that would be either correctly received, or erased, or corrupted by uniformly distributed random symbol substitution errors, and all strings are randomly shuffled  with each other.
We first derive the capacity of the outer channel, which implies that the uniformly distributed random symbol substitution errors are only  as harmful as the erasure errors (for infinite-length non-binary random linear codes with near maximum likelihood decoding).
Next, we propose practically efficient coding schemes which encode the bits at the same position of different strings into a codeword.
We compute the soft/hard information of each bit, which allows us to independently decode the bits within a codeword, leading to an independent decoding scheme.
To improve the decoding performance, we measure the reliability of each string based on the independent decoding result, and perform a further step of decoding over the most reliable strings, leading to a joint decoding scheme.
Simulations with low-density parity-check  codes confirm that the joint decoding scheme can reduce the frame error rate by more than 3 orders of magnitude compared to the independent decoding scheme, and it can outperform the state-of-the-art decoding scheme in the literature across a wide range of parameter regions.
\end{abstract}

\begin{IEEEkeywords}
Capacity, DNA-based data storage, joint decoding scheme, low-density parity-check (LDPC) code, outer channel.
\end{IEEEkeywords}

\IEEEpeerreviewmaketitle


\section{Introduction}\label{section: introduction}

Due to the increasing demand for data storage, DNA-based data storage systems have attracted significant attention, since they can achieve extremely high data storage capacity, with very long life duration and low maintenance cost \cite{erlich2017dna, organick2018random, heckel2019acharacterization, dong2020dna}.
A typical architecture for these systems  \cite{erlich2017dna, organick2018random, heckel2019acharacterization, dong2020dna} is depicted in Fig. \ref{fig: DNA model}.
The $k \times w$ binary source data matrix $\mb{U} \in \mathbb{F}_2^{k \times w}$ is first transformed into $n \times l$ binary matrix $\mb{X} \in \mathbb{F}_2^{n \times l}$ by the outer encoder,  where $\mbb{F}_2 \triangleq \{0, 1\}$ denotes the binary field.
Next, each row of $\mb{X}$ is transformed into an inner codeword by the inner encoder.
Finally, each inner codeword is synthesized (written) as a DNA string/strand/oligo and all DNA strings are stored  in an unordered manner.
A DNA string consists of 4 bases: A, T, C, G.
However, since each base naturally corresponds to two bits, it is equivalent to considering bits for simplicity \cite{he2023basis, shomorony2021DNA}.
When recovering source data, a fraction of DNA strings are first sequenced
(read out) in a random sampling fashion.
Next, the sequenced strings are decoded by the inner decoder, resulting in an estimation $\mb{Z}$ of $\mb{X}$, where $\mb{Z}$ is regarded as having $n$ rows, each of which is either empty (represented by `?') or a length-$l$ bit string (retrieved as information from an inner codeword during the inner decoding).
Finally, the outer decoder takes $\mb{Z}$ to recover the source data.

\begin{figure}[t]
\centering
\vspace{2mm}
\includegraphics[scale = 0.5]{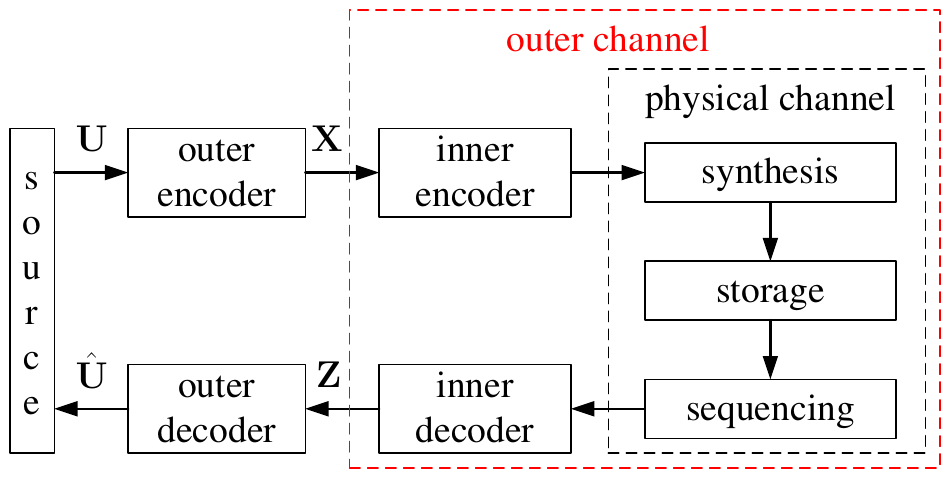}
\caption{
A typical architecture for DNA-based data storage.
}
\label{fig: DNA model}
\end{figure}

In this paper, we refer to the synthesis, storage, and sequencing together as physical channel.
When passing through the physical channel, on the one hand, DNA strings are randomly shuffled with each other, i.e., lose relative order, due to the unordered storage and random sampling.
On the other hand, DNA strings can be missing or corrupted by edit (insertion, deletion, and substitution) errors, and in particular, those with unbalanced GC-content or too long homopolymers are more likely to suffer from such errors \cite{heckel2019acharacterization}.

To resist the edit errors within a DNA string,  one role of the inner code is to work as constrained code, which encodes each row of $\mb{X}$ as a DNA string with desired GC-content and homopolymers, such that the raw error rate of each string can be reduced when passing through the physical channel, i.e., improve the  quality of physical channel.
Another role of the inner code is to work as error-correction code (ECC), which needs to detect/correct the errors within a DNA string.
Note that the inner decoding has a non-negligible possibility of encountering undetectable errors, i.e., generating incorrect inner codewords that would be beyond the error detection capability of any inner code.
The main reasons leading to this problem include that (i) high rate  inner codes with weak error-detection capability are normally employed since  DNA synthesis is very expensive \cite{erlich2017dna, organick2018random, heckel2019acharacterization, dong2020dna} and (ii) the lack of practically efficient inner codes for detecting multiple  edit errors \cite{mitzenmacher2009asurvey, cheraghchi2021anoverview,  sima2020optimal}.
Therefore,  the non-inner codewords generated by the inner decoding can be simply discarded, leading to a part of the empty rows of $\mb{Z}$ (the other part of the empty rows of $\mb{Z}$ corresponds to the missing strings caused by the physical channel); the inner codewords generated by the inner decoding are used to retrieve information to form all the non-empty rows of $\mb{Z}$,  where the rows retrieved from  incorrect inner codewords (undetectable errors) would suffer from burst bit  substitution errors.

According to the above  discussions, the matrix $\mb{Z}$, as the output of the inner decoder and the input of the outer decoder, must satisfy the following two facts:
\begin{enumerate}[{Fact} 1:]
\item   Each row of $\mb{Z}$, relative to the original one of $\mb{X}$ it comes from, has exactly three statuses:
    \begin{itemize}
    \item   correct (with a sufficient large probability)
    \item   empty/erased (corresponding to a missing string from the physical channel or a discarded non-inner codeword),
    \item   incorrect with burst bit  substitution errors (corresponding to the undetectable errors from the inner decoding).
    \end{itemize}
\item  The rows of $\mb{Z}$ are unordered, i.e., it is not known that each row of $\mb{Z}$ comes from which row of $\mb{X}$.
\end{enumerate}

Therefore, to ensure the reliability of DNA-based data storage, the role of the outer code is to work as ECC, which needs to address the above problems of  $\mb{Z}$.
In this paper, we focus on the outer code.
To simplify the discussion on it, following \cite{he2023basis}, we combine the inner code and the physical channel together into an equivalent channel, called outer channel in this paper, as shown in Fig. \ref{fig: DNA model}.
Following \cite{he2023basis} again, we  further model the outer channel as the concatenation of two sub-channels: channel-1 and channel-2, as shown in Fig. \ref{fig: Channel model}.

\begin{figure}[t]
\centering
\vspace{2mm}
\includegraphics[scale = 0.5]{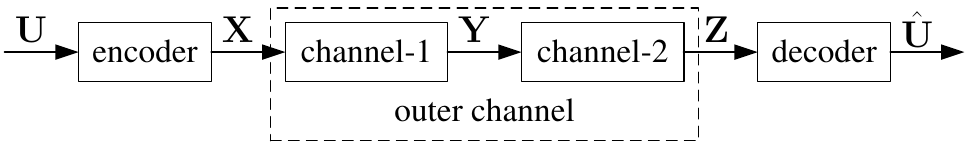}
\caption{
System model for the outer channel, where $\mb{U} \in \mathbb{F}_2^{k \times w}$ is the data matrix, $\mb{X} \in \mathbb{F}_2^{n \times l}$ is the encoded matrix, channel-1 has transition probability given by \eqref{eqn: channel-1}, channel-2 randomly permutates the rows of $\mb{Y}$, and $\wh{\mb{U}}$ is an estimation of $\mb{U}$.
}
\label{fig: Channel model}
\end{figure}

Briefly speaking (see Section \ref{section: system model} for more details), we view the inner code as a part of channel-1 and each row  (NOT each bit)  of $\mb{X}$ is treated as a unit (a symbol of $l$ bits) when passing through channel-1.
Specifically, each row of $\mb{X}$ is correctly received with probability $p_c$, or erased with probability $p_e$, or corrupted by uniformly distributed random symbol substitution errors  (a type of burst bit  substitution errors) with probability $p_s$.
After that, for the output $\mb{Y}$ of channel-1, its rows are randomly shuffled with each other by channel-2 (i.e., channel-2 works as a permutation channel), leading to the final output $\mb{Z}$ of the outer channel.
In this case, the rows of $\mb{Z}$ lose relative order and may be either correct, or erased, or corrupted by burst bit substitution errors, well matching Facts 1 and 2.

In this paper, we are interested in both computing the capacity and developing practically efficient   coding schemes for the outer channel in Fig. \ref{fig: Channel model} (while the outer channel in Fig. \ref{fig: DNA model} is referred to as ``original outer channel" in the rest of this paper).
To be specific, following the classical definition in \cite{cover1999elements}, we define code rate as the average number of information bits stored per coded bit.
Consequently, the code rate in Fig. \ref{fig: Channel model} is
\begin{equation}\label{eqn: code rate}
R \triangleq \frac{kw}{nl}.
\end{equation}
We further state that the rate $R$ is achievable if there exists a family of codes with rate $R$ such  that the decoding error probability tends to $0$ as $n \to \infty$, and the channel capacity refers to the maximum achievable code rate.

\color{black}

\subsection{Related Work}

This is an extended work of its conference version \cite{ding2023anefficient}.
In \cite{ding2023anefficient}, we proposed an efficient coding scheme for the outer channel.
In this work, we further prove the capacity of the outer channel.

The outer channel is almost the same as the concatenation channel in \cite[Fig. 2(a)]{he2023basis}, except that we additionally consider erasure errors in this paper.
In \cite{he2023basis}, fountain codes \cite{byers1998digital, luby2002lt} were used as the outer codes, and an efficient decoding scheme, called basis-finding algorithm (BFA), was proposed for fountain codes.
BFA can also work for the outer channel in this paper when fountain codes are being used, and it serves as a benchmark for the simulation results later in Section \ref{section: simulation}.
However, the capacity of the concatenation channel has not been discussed in \cite{he2023basis}.

In \cite[Fig. 2]{makur2020coding}, Makur considered a noisy permutation channel which essentially is the same as the concatenation channel in \cite[Fig. 2(a)]{he2023basis}.
Define
\begin{equation}
\beta \triangleq \frac{l}{\log_2 n},
\end{equation}
in which $n$ and $l$ are the number of rows and number of columns in $\mb{X}$, respectively.
Makur modified the conventional definition of channel capacity \cite{cover1999elements} and computed the modified capacity for $n \to \infty$ and fixed $l$ (i.e., $\beta \to 0$ in this case).
However, we will soon see that the (conventional) capacity of the outer channel equals zero for $\beta \leq 1$, and we thus aim to compute the  capacity of the outer channel for $\beta > 1$ in this paper.

In  \cite{shomorony2021DNA}, Shomorony and Heckel considered a noise-free shuffling-sampling channel, which can also be described as the concatenation of two sub-channels.
The second sub-channel is the same as our channel-2.
The first sub-channel outputs some noise-free copies of rows of the input $\mb{X}$, where the number of copies follows a certain probability distribution, e.g, the Bernoulli distribution.
Let $p_{\text{erasure}}$ denote the probability that the first sub-channel does not output any copy for a row of $\mb{X}$.
According to \cite[Theorem 1]{shomorony2021DNA}, the noise-free shuffling-sampling channel has capacity
\begin{equation}\label{eqn: Cnf}
    C_{nf} = (1 - p_{\text{erasure}})(1 - 1/\beta),
\end{equation}
as long as $\beta > 1$, and  $C_{nf} = 0$ for $\beta \leq 1$.
In \eqref{eqn: Cnf},  $1 - 1/\beta$ can be understood as the loss due to the random permutation of the second sub-channel, which implies that index-based coding scheme that uses $\log_2 n$ bits for uniquely labelling each row of $\mb{X}$ is optimal \cite{shomorony2021DNA}.

Shomorony and Heckel \cite{shomorony2021DNA} additionally studied a noisy shuffling-sampling channel.
In \cite{weinberger2022dna} and  \cite{lenz2023noisy},  noisy drawing channels were considered.
However,  these channels include  discrete memoryless channels (DMCs, e.g., binary symmetric channel (BSC)) as a part, which add independent and identically distributed (i.i.d.) substitution errors to each bit within a row of $\mb{X}$.
This is the key difference from our channel-1 which does not independently add substitution errors to each bit within a row.
We will later explain that, compared to the DMCs in \cite{shomorony2021DNA, weinberger2022dna, lenz2023noisy}, our channel-1 can better match Fact 1.

Our channel-1 belongs to the class of noise-erasure channels (NECs), for which Song \ti{et al.} \cite{song2018capacity} rigorously derived the channel capacity.
It is crucial to clarify that, the outer channel, as the concatenation of channel-1 and a permutation channel (i.e., channel-2), is fundamentally different from a standalone NEC investigated in \cite{song2018capacity, weidmann2012fresh}.
In fact, for a channel formed by the concatenation of a noisy channel and a permutation channel, the permutation channel  introduces intrinsic complexities in capacity derivation, even when the component noisy channel is well-studied.
For example, the noisy shuffling-sampling channel \cite{shomorony2021DNA} is the concatenation of a BSC and a permutation channel.
Although the BSC is well-studied and its capacity is well-known, the capacity of the noisy shuffling-sampling channel, as presented in \cite[Theorem 2]{shomorony2021DNA}, is only solvable  within a limited parameter range.
Moreover, Weinberger \ti{et al.} \cite{weinberger2022dna} and  Lenz \ti{et al.} \cite{lenz2023noisy}  also were only able to  partially obtain the capacity of noisy drawing channels, where the noisy shuffling-sampling channel is included as a special case.
Given that our outer channel follows the same concatenated architecture, deriving its capacity represents a fortunate but highly non-trivial achievement.

We summarize the differences between the existing work and this work in Table \ref{table: comparison}.
As can be seen from Table \ref{table: comparison},  the capacity of the outer channel is not clear for $\beta > 1$ up till now.

\begin{table}[t]
	\begin{center}
\caption{Comparison between the existing work and this work}
\label{table: comparison}
	\renewcommand\arraystretch{1.3}
		\begin{tabular}{|c|p{2.7cm}|p{2.8cm}|}
			\hline
            Work & Channel model & Difference\\
            \hline
            this work & outer channel & \\\hline
            \cite{ding2023anefficient} & outer channel & no capacity result\\\hline
            \cite{he2023basis} & concatenation channel & no erasure errors; no capacity result\\\hline
            \cite{makur2020coding} & noisy permutation channel & no erasure errors; modified capacity for $n \to \infty$ and fixed $l$ (i.e., $\beta \to 0$)\\\hline
            \cite{shomorony2021DNA} & noise-free shuffling-sampling channel & no substitution errors \\\hline
            \cite{shomorony2021DNA, weinberger2022dna, lenz2023noisy} & noisy  shuffling-sampling / drawing channels & bits within a string suffer from i.i.d. substitution errors\\\hline
            \cite{song2018capacity, weidmann2012fresh} & noise-erasure channels & not concatenating a permutation channel\\
            \hline
		\end{tabular}
	\end{center}
\end{table}

\subsection{Main Contributions}

This paper provides two main contributions.
Our first contribution is to derive the capacity $C$ of the outer channel, given by the following theorem.

\begin{theorem}\label{theorem: capacity}
The capacity of the outer channel is
\begin{equation}\label{eqn: C}
    C = p_c(1 - 1/\beta),
\end{equation}
as long as $\beta > 1$, and $C = 0$ for $\beta \leq 1$.
\end{theorem}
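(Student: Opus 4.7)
The plan is to split the proof into two regimes according to whether $\beta \leq 1$ or $\beta > 1$. For $\beta \leq 1$, I argue that even in the noiseless case $p_c = 1$ the output $\mb{Z}$ depends on $\mb{X}$ only through the multiset of its rows, and the number of such size-$n$ multisets drawn from $\{0, 1\}^l$ is at most $\binom{2^l + n - 1}{n}$. Taking logarithm and dividing by $nl$ produces a quantity that vanishes when $\beta \leq 1$, so no positive rate is achievable and $C = 0$.

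For $\beta > 1$, I establish the converse $C \leq p_c(1 - 1/\beta)$ by introducing a genie $\mb{S}$ that reveals, for each of the $n$ rows of $\mb{Z}$, whether it is correctly received, substituted, or erased. Because $\mb{S}$ is independent of $\mb{X}$, $I(\mb{X}; \mb{Z}) \leq I(\mb{X}; \mb{Z} \mid \mb{S})$, and I decompose this across the three row types. Erased rows contribute nothing; each substituted row is uniformly distributed over the $2^l - 1$ values different from its original and thus contributes at most $\log_2(2^l/(2^l - 1))$ bits, giving total contribution $o(n)$. Given $\mb{S}$, the correctly received submatrix $\mb{Z}_c$ is determined, up to a uniform random permutation, by a uniform random sample of $T_c$ rows of $\mb{X}$; hence $H(\mb{Z}_c \mid \mb{S})$ is upper bounded by the log of the number of $T_c$-multisets of $\{0, 1\}^l$, which by Stirling's formula is $T_c l - T_c \log_2 T_c + O(T_c)$. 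Averaging over $T_c \sim \mathrm{Binomial}(n, p_c)$ and dividing by $nl$ yields $p_c(1 - 1/\beta) + o(1)$.

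For achievability, I construct an explicit index-based coding scheme. Fix a small $\epsilon > 0$, set $d = \lceil (1 + \epsilon)\log_2 n \rceil$, and assign each of the $n$ rows a distinct length-$d$ prefix (index) among the $2^d \geq n^{1+\epsilon}$ available strings; the remaining $l - d$ bits of each row carry information protected by a random linear code of rate approaching $p_c$. The decoder extracts the prefix of each non-erased row of $\mb{Z}$ and, if it is a valid index, places the data at the corresponding position of a reconstructed data matrix (resolving conflicts by declaring the affected position erased). A substituted row has valid index with probability at most $n / 2^d \leq n^{-\epsilon}$, so the expected fraction of data-matrix positions affected by a substituted row is $o(1)$; the resulting channel on the data is an erasure-and-substitution channel with erasure rate $1 - p_c + o(1)$ and substitution rate $o(1)$, on which the code recovers the data with vanishing error probability. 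The overall rate is $(p_c - o(1))(1 - (1 + \epsilon)/\beta)$, which tends to $p_c(1 - 1/\beta)$ as $\epsilon \to 0$ and $n \to \infty$.

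The main obstacle will be the converse, particularly verifying that the substituted-row contribution is uniformly $o(nl)$ across realizations of $\mb{S}$ and that the multiset counting bound for $H(\mb{Z}_c \mid \mb{S})$ yields the $-\log_2 n / l$ correction after averaging. This argument is similar in spirit to the converse of \cite[Theorem 1]{shomorony2021DNA}, but the substitution errors prevent a clean reduction to the noise-free shuffling-sampling channel and hence require a more delicate entropy decomposition.
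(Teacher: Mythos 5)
Your overall architecture is sound and the converse follows essentially the same route as the paper: the paper also partitions $\mb{Z}$ into the substituted rows $\mb{S}$ and the rest $\bar{\mb{S}}$, bounds $I(\mb{X};\mb{S}\mid\bar{\mb{S}})$ by $H(\mb{S}\mid\bar{\mb{S}})-H(\mb{S}\mid\mb{X},\bar{\mb{S}})\leq p_snl-p_sn\log_2(2^l-1)=o(1)$ (your per-row $\log_2(2^l/(2^l-1))$ bound is the same computation), and handles the clean part by the noise-free multiset-counting argument of Shomorony--Heckel. One point where your phrasing needs repair: $H(\mb{Z}_c\mid\mb{S})$ for the \emph{ordered} submatrix is \emph{not} bounded by the log of the number of $T_c$-multisets --- the ordering contributes an extra $\log_2(T_c!)\approx T_c\log_2 T_c$ that would exactly cancel the $-1/\beta$ correction. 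The correct statement is that the ordering is a uniform permutation independent of $\mb{X}$ given the multiset $M_c$, so $I(\mb{X};\mb{Z}_c\mid\mb{S})=I(\mb{X};M_c\mid\mb{S})\leq H(M_c\mid\mb{S})\leq\log_2\binom{2^l+T_c-1}{T_c}$; the paper sidesteps this by declaring $\bar{\mb{S}}$ unordered from the outset. Your $\beta\leq 1$ multiset count is a self-contained replacement for the paper's citation of \cite{shomorony2021DNA} and is fine.

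Your achievability is genuinely different from the paper's. The paper keeps addresses of length exactly $\lceil\log_2 n\rceil$, encodes $\mb{U}$ with one random $(n,k)$ linear code over $\mbb{F}_{2^w}$, and uses a global threshold decoder (accept the unique codeword matrix with $|\mb{X}_i\cap\mb{Z}|\geq n(p_c-\epsilon)$), analyzed by a union bound over all $2^{kw}$ codewords and all $\binom{n}{n(p_c-\epsilon)}$ row subsets using the pairwise row-collision probability $2^{-w}$ (Claim 1). You instead spend $(1+\epsilon)\log_2 n$ address bits so that substituted rows rarely carry valid indices, reducing the problem to a $2^{w}$-ary erasure channel with an $o(1)$ fraction of symbol substitutions; this is cleaner conceptually and recovers the rate as $\epsilon\to 0$, at the cost of a two-stage limit. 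The one step you should not leave implicit is ``the code recovers the data with vanishing error probability'': establishing that a rate-$(p_c-o(1))$ random linear code over $\mbb{F}_{2^w}$ corrects a $(1-p_c+o(1))$-fraction of erasures plus an $o(1)$-fraction of errors requires the same union bound with pairwise collision probability $2^{-w}$, and it is precisely the hypothesis $\beta>1$ (hence $w=l-d\to\infty$) that makes the $2^{kw}\cdot 2^{-w\cdot\Omega(n)}$ term vanish. This is where the paper's Claim 1 and the final chain of inequalities do the real work, and your sketch should make the dependence on $w\to\infty$ explicit.
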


From \cite{ shomorony2021DNA}, it is known that $C = 0$ for $\beta \leq 1$.
Therefore,  we focus on $\beta > 1$ by default from now on except where otherwise stated.
Intuitively, the received rows in $\mb{Z}$ that are corrupted by uniformly distributed random symbol substitution errors can provide no (at most negligible) information about $\mb{X}$.
Suppose there is  a genie-aided outer channel which has a genie to identify and remove these incorrect rows.
In this case, the channel becomes a type of noise-free  shuffling-sampling channel in \cite{shomorony2021DNA} with $p_{\text{erasure}} = p_e + p_s$.
According to \eqref{eqn: Cnf}, its capacity is given by $C_{nf} = (1 - p_{\text{erasure}})(1 - 1/\beta) = p_c (1 - 1/\beta) $.
In contrast to \eqref{eqn: C}, it implies that the uniformly distributed random symbol substitution errors are only  as harmful as erasure errors.
Note that this conclusion follows from infinite-length non-binary random linear codes with near maximum likelihood (ML) decoding (used in the proof of Theorem \ref{theorem: capacity}).
However, for finite-length binary linear codes with practically efficient decoding, it is well known, also shown by our simulations in Section \ref{section: simulation}, that the former can be more harmful than the latter.
We will discuss more on this issue later in Section \ref{section: simulation}.
Theorem \ref{theorem: capacity} motivates us to develop efficient decoding schemes in Section \ref{section: Decoding Scheme} by first converting incorrect rows to erased ones, i.e., first identify and remove the incorrect rows.

Our second contribution is to develop an efficient coding scheme for the outer channel.
More specifically, our encoding scheme is first to choose a block ECC to encode each column of $\mb{X}$ as a codeword so as to tackle erasure and substitution errors from channel-1, and then to add a unique address to each row of $\mb{X}$ so as to combat disordering of channel-2.
Our decoding scheme is first to derive the soft/hard information for each  data bit in $\mb{X}$.
It allows us to decode each column of $\mb{X}$ independently, leading to an independent decoding scheme.
However, this scheme may not be very efficient since it requires the successful decoding of all columns to fully recover $\mb{X}$.
Therefore, we further measure the reliability of the received rows of $\mb{X}$ based on the independent decoding result.
Then, we take the most reliable received rows to recover $\mb{X}$ like under the erasure channel, leading to an efficient joint decoding scheme.
Simulations with low-density parity-check (LDPC) codes \cite{Gallager62} show that the joint decoding scheme can reduce the frame error rate (FER) by more than 3 orders of magnitude compared to the independent decoding scheme.
Moreover, it demonstrates that the joint decoding scheme and the BFA can outperform each other under different parameter regions.

\subsection{Organization}

The remainder of this paper is organized as follows.
Section \ref{section: system model} illustrates the model for the outer channel, and Section \ref{section: encoder and ML decoder} further gives an encoder as well as an ML decoder for the outer channel.
Section \ref{section: capacity} proves Theorem \ref{theorem: capacity}.
Section \ref{section: Decoding Scheme} develops both the independent and joint decoding schemes.
Section \ref{section: simulation} presents the simulation results.
Finally, Section \ref{section: conclusion} concludes this paper.

\subsection{Notations}

In this paper, we generally use non-bold lowercase letters for scalars (e.g., $n$), bold lowercase letters for (row) vectors (e.g., $\mb{x}$), bold uppercase letters for matrices (e.g., $\mb{X}$), non-bold uppercase letters for random variables and events (e.g., $E$), and calligraphic letters for sets\footnote{This paper only considers standard sets, i.e., sets without repeated elements. That is, if repeated elements are inserted into a set, then only one  of them is kept.} (e.g., $\mathcal{Z}$).
For any positive integer $q$ which is a power of a prime, denote $\mbb{F}_{q}$ as the Galois field $\mr{GF}(q)$.
For any non-negative integer $n$, denote  $[n] \triangleq  \{1, 2, \ldots, n\}$.
For any $n \times l$ matrix $\mb{X}$, we refer to its $i$-th row and $(i,j)$-th entry by $\mb{x}_i$ and $x_{i, j}$, respectively, i.e., $\mb{X} = [(\mb{x}_i^{\mr{T}})_{1 \leq i \leq n}]^{\mr{T}} = ({x}_{i,j})_{1 \leq i \leq n, 1 \leq j \leq l}$, where $(\cdot)^{\mr{T}}$ is the transpose of a vector or matrix.
For any two matrices $\mb{X}$ and $\mb{Y}$, supposing $\mc{R}(\mb{X})$ and $\mc{R}(\mb{Y})$ are the sets consisting of the row vectors of $\mb{X}$ and $\mb{Y}$, respectively,  define $\mb{X} \cap \mb{Y} \triangleq \mc{R}(\mb{X}) \cap \mc{R}(\mb{Y}) $, e.g.,
\[
\mb{X} \cap \mb{Y} = \{(0,0), (1, 1)\} \text{~for~} \mb{X} = \begin{bmatrix}
  0 & 0\\
  0 & 0\\
  1 & 1\\
  0 & 1\\
\end{bmatrix},
\mb{Y} = \begin{bmatrix}
  0 & 0\\
  0 & 0\\
  1 & 1\\
  1 & 1\\
\end{bmatrix}.
\]
For any two events $E_1$ and $E_2$, denote $E_1 \vee E_2$ as  their union, and denote $\mbb{P}(E_1)$ as the probability that $E_1$ happens.
For any set $\mathcal{Z}$, denote its cardinality by $|\mc{Z}|$.
The question mark `?' can denote an unknown bit or row vector when the context is clear.

\section{System Model}\label{section: system model}

In this section, we give detailed explanations to the system model in Fig. \ref{fig: Channel model}.
Specifically, the  data matrix $\mb{U} \in \mbb{F}_2^{k \times w}$ is first encoded as $\mb{X}\in \mathbb{F}_2^{n \times l}$.
Next, $\mb{X}$ is transmitted, with each row being a unit (a symbol of $l$ bits), over the outer channel and $\mb{Z}$ is the output.
Finally, the decoding is performed over $\mb{Z}$ to give an estimation $\wh{\mb{U}}$ of $\mb{U}$.
The encoder and decoder in Fig. \ref{fig: Channel model} respectively correspond to the outer encoder and decoder in Fig. \ref{fig: DNA model}. The outer channel  in Fig. \ref{fig: Channel model}, which is modelled as the concatenation of channel-1 and channel-2, represents a simplification of the original outer channel in Fig. \ref{fig: DNA model}.

We first explain channel-1 in detail.
The input and output of channel-1 are $\mb{X}$ and $\mb{Y}$, respectively, where for each $i \in [n]$, $\mb{y}_i$  is the channel output for transmitting $\mb{x}_i$ over channel-1.
We model the  transition probability of channel-1 by
\begin{equation}
    \label{eqn: channel-1}
\mbb{P}(\mb{y}_i | \mb{x}_i) =
    \begin{cases}
        p_c, & \mb{y}_i = \mb{x}_i, \\
        p_e, & \mb{y}_i = ?, \\
        p_s / (2^l - 1), & \mb{y}_i = \mb{e} \in \mathbb{F}_2^l \setminus \{\mb{x}_i\},
    \end{cases}
\end{equation}
where `?' represents an erasure error (unknown/empty row vector), $p_c + p_e + p_s = 1$, and
\begin{equation}\label{eqn: pc > ps}
    p_c > p_s / (2^l - 1).
\end{equation}

To understand \eqref{eqn: channel-1}, recall that when transmitting over the original outer channel in Fig. \ref{fig: DNA model}, $\mb{x}_i$ will sequentially pass through the inner encoder, physical channel, and inner decoder such that the corresponding outer channel output is an estimation of $\mb{x}_i$ outputted by the inner decoder.
Eqn. \eqref{eqn: channel-1} exactly treats $\mb{y}_i$ as the estimation of $\mb{x}_i$ outputted by the inner decoder, which thus can be classified into the following three cases (as what Fact 1 stated):
\begin{itemize}
\item   $\mb{y}_i = \mb{x}_i$: The inner decoder successfully outputs the correct estimation of $\mb{x}_i$. It happens with probability $p_c$ in \eqref{eqn: channel-1}.
\item   $\mb{y}_i = ?$: The inner decoder fails to output any valid estimation of $\mb{x}_i$  such that $\mb{y}_i$ is regarded as an erasure error.
    It happens with probability $p_e$ in \eqref{eqn: channel-1}.
    This case can be caused by two reasons: (i)  the inner codeword corresponding to $\mb{x}_i$ is totally missing when passing through the physical channel such that the inner decoder is impossible to generate a valid estimation of $\mb{x}_i$, and (ii)  the inner codeword corresponding to $\mb{x}_i$ suffers too many edit errors from the physical channel such that the inner decoder cannot generate any codeword and then a valid estimation of $\mb{x}_i$.
\item   $\mb{y}_i \in \mathbb{F}_2^l \setminus \{\mb{x}_i\}$: The inner decoder outputs a valid but incorrect estimation $\mb{y}_i$ of $\mb{x}_i$, i.e., the inner decoding encounters an  undetectable error.
    As in this case $\mb{y}_i$ is retrieved as information string from an inner codeword with undetectable edit errors, it can be any vector in $\mathbb{F}_2^l \setminus \{\mb{x}_i\}$.
    More precisely, $\mb{y}_i$ generally would heavily differ from $\mb{x}_i$ in terms of Hamming distance (bit-level by default) due to two reasons:
    (i) $\mb{x}_i$ may be corrupted by an equal number of insertion and deletion errors which is equivalent to a large number of substitution errors.
    For example, the string `01010101' can be transformed into `10101010' by deleting the leftmost `0' and inserting an `1' at the rightmost position, but their Hamming distance is eight.
    (ii) Many constrained codes (one role of the inner code) related to GC-content and homopolymers, such as the capacity-achieving ones in \cite{	liu2022capacity} and \cite{	nguyen2021capacity}, can suffer from severe error propagation, i.e., a single bit of error in the codeword can result in a large number of  errors in the retrieved information.
    As a result, it is reasonable to model the transition probability of channel-1 in this case as the third case of \eqref{eqn: channel-1} where $\mb{y}_i$ takes any vector from  $\mathbb{F}_2^l \setminus \{\mb{x}_i\}$ with an equal probability of $p_s / (2^l - 1)$, i.e., $\mb{y}_i$ suffers from uniformly distributed random symbol substitution errors (a type of burst bit  substitution errors).
\end{itemize}
Moreover, \eqref{eqn: pc > ps} is required since the inner decoding should succeed with a large enough probability to ensure a good chance on fully recovering the source data.
Note that for fixed $p_c$ and $p_s$, \eqref{eqn: pc > ps} always holds as $l \to \infty$.

We continue to explain channel-2.
The input and output of channel-2 are $\mb{Y}$ and $\mb{Z}$, respectively.
For convenience,  similar to $\mb{Z}$, we regard $\mb{Y}$ as  having $n$ rows in which any erased rows are represented by `?'.
The channel-2 shuffles the rows of $\mb{Y}$ uniformly at random, i.e.,  $\mb{Z}$ takes any permutation of the rows of $\mb{Y}$ with an equal probability of $1/n!$.
Therefore, channel-2 well matches Fact 2.

According to the above discussions, channel-1 and channel-2 can solely match Facts 1 and 2, respectively.
Therefore, their concatenation can well simulate the function of  the outer channel and further greatly simplify the characterization of the outer channel.
This is the main purpose for using concatenation channel.

\begin{remark}
In \cite{shomorony2021DNA, weinberger2022dna, lenz2023noisy}, DMCs are used to characterize the substitution errors for an entire DNA-based storage system.
However, if only modelling the outer channel, it is not sufficiently accurate to model the substitution errors via a DMC which adds i.i.d. substitution errors to each bit of the rows of $\mb{Z}$, since Fact 1 (each incorrect row of $\mb{Z}$ should suffer from burst bit substitution errors) is violated in this case.
\end{remark}

\section{Encoder and Maximum Likelihood Decoder}\label{section: encoder and ML decoder}

In this section, we develop an efficient encoder and an ML decoder for the outer channel.

\subsection{Encoder}

To protect against the errors introduced by the outer channel, we adopt the encoder illustrated by Fig. \ref{fig: encoding} to convert the data matrix $\mb{U} \in \mathbb{F}_2^{k \times w}$ into the encoded matrix $\mb{X} \in \mathbb{F}_2^{n \times l}$ with the following two steps:
\begin{enumerate}[Step 1: ]
\item   An ECC  is chosen to encode $\mb{U} \in \mathbb{F}_2^{k \times w}$ as $\mb{V} \in \mathbb{F}_2^{n \times w}$, so as to tackle the erasure errors and substitution errors from channel-1.
    Random $(n, k)$ linear ECCs over $\mbb{F}_{2^w}$ will be used for proving  Theorem \ref{theorem: capacity}, while any  excellent binary $(n, k)$ linear ECCs (e.g., LDPC codes) can be used to independently  encode each column of $\mb{U}$ as a column of $\mb{V}$ to achieve a good trade-off between efficiency and simplicity in practical scenarios.
\item   A unique address of bit-width $a \triangleq l - w \geq \lceil \log_2 n \rceil$ is added to the tail of each row of $\mb{V}$, leading to $\mb{X}$, so as to combat the disordering of channel-2.
    Without loss of generality (WLOG), we set the address of $\mb{x}_i$ as $i$ for any $i \in [n]$.
    Note that given $n$, increasing $a$ can reduce the probability that an address is changed to a valid address when substitution errors occur.
    Since for DNA-based data storage, we  can simply discard the received rows without valid addresses, increasing $a$ is equivalent to reducing $p_s$ and increasing $p_e$ in our system model.
    Thus,  we use the minimum $a$ by default, i.e., $a = \lceil \log_2 n \rceil$.
    By convenience, we refer to the first $w$ columns and the last $a$ columns of $\mb{X}$ by data  and address, respectively.
\end{enumerate}
According to the above discussions, the second encoding step is fixed such that the overall encoding scheme would be clear once the ECC used in the first encoding step is given.

\begin{figure}[t]
\centering
\includegraphics[scale = 0.5]{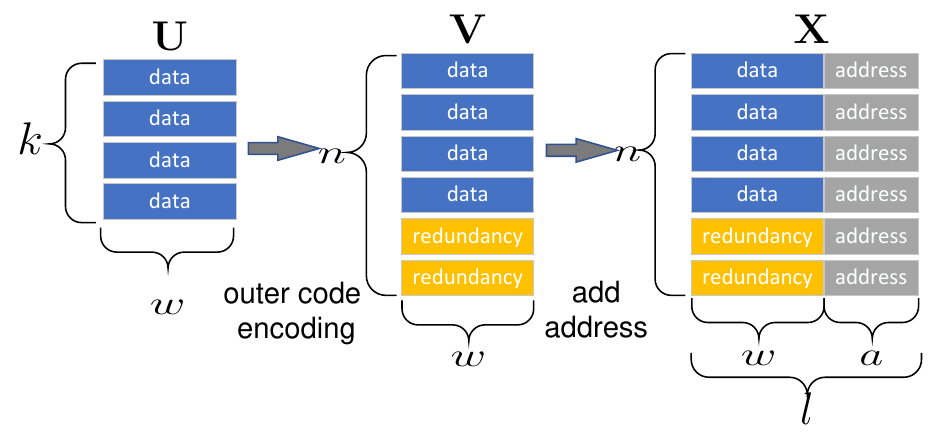}
\caption{
Encoder for the outer channel: first, an ECC is chosen to encode $\mb{U}\in \mathbb{F}_2^{k \times w}$ as $\mb{V}\in \mathbb{F}_2^{n \times w}$;
next,  for each $i \in [n]$, the  address $i$ of bit-width $a = l - w = \lceil \log_2 n \rceil$ is added to the tail of $\mb{v}_i$ to form $\mb{x}_i$, where $\mb{v}_i$ and $\mb{x}_i$ are the $i$-th row of $\mb{V}$ and $\mb{X}$, respectively.}
\label{fig: encoding}
\end{figure}

\color{black}

\begin{example}\label{example: encoding}
This example illustrates the encoding scheme in Fig. \ref{fig: encoding} with a binary linear ECC.
Consider the following source data matrix:
\begin{equation}
    \mb{U} =
    \left[
        \begin{array} {c c c c}
            0 & 0 & 1 & 1\\
            0 & 1 & 0 & 1\\
         \end{array}
    \right].
\end{equation}
We systematically encode each column of $\mb{U}$ by an $(n = 6, k =  2)$ linear ECC with a minimum distance of $4$ and with the following parity-check matrix:
\begin{equation}\label{eqn: H}
    \mb{H} =
    \left[
        \begin{array} {c c c c c c c}
            1 & 0 & 1 & 0 & 0 & 0\\
            1 & 1 & 0 & 1 & 0 & 0\\
            1 & 1 & 0 & 0 & 1 & 0\\
            0 & 1 & 0 & 0 & 0 & 1\\
         \end{array}
    \right].
\end{equation}
We then get
\begin{equation}\label{eqn: V eg encoding}
    \mb{V} =
    \left[
        \begin{array} {c}
            \mb{U}\\\hline
            \mb{P}\\
        \end{array}
    \right]
    =
    \left[
        \begin{array} {c c c c}
            0 & 0 & 1 & 1\\
            0 & 1 & 0 & 1\\\hline
            0 & 0 & 1 & 1\\
            0 & 1 & 1 & 0\\
            0 & 1 & 1 & 0\\
            0 & 1 & 0 & 1\\
        \end{array}
    \right],
\end{equation}
where $\mb{P}$ denotes the parity-check part.
For each $i \in [n]$, adding address $i$ (of $a = 3$ bits) into the $i$-th row of $\mb{V}$ leads to
\begin{equation}\label{eqn: X eg encoding}
    \mb{X} = [\mb{V|\mb{A}}] =
    \left[
        \begin{array} {c c c c | c c c}
            0 & 0 & 1 & 1 & 0 & 0 & 1\\
            0 & 1 & 0 & 1 & 0 & 1 & 0\\\hline
            0 & 0 & 1 & 1 & 0 & 1 & 1\\
            0 & 1 & 1 & 0 & 1 & 0 & 0\\
            0 & 1 & 1 & 0 & 1 & 0 & 1\\
            0 & 1 & 0 & 1 & 1 & 1 & 0\\
        \end{array}
    \right],
\end{equation}
where $\mb{A}$ denotes the address part.
\end{example}

\subsection{Maximum Likelihood Decoder}

Following the classical definition in \cite{cover1999elements}, the ML decoder is defined as a decoder that outputs $\wh{\mb{U}}_{\text{ML}} \triangleq \arg\max_{\mb{U} \in \mbb{F}_2^{k \times w}}{\mathbb{P}(\mb{Z}\mid\mb{U})}$ given $\mb{Z}$ as the outer channel output.
Since source data matrix $\mb{U}$ corresponds to encoded matrix $\mb{X}$ in a one-to-one manner, it is equivalent to defining the ML decoder as $\wh{\mb{X}}_{\text{ML}} \triangleq \arg\max_{\mb{X} \in \mc{X}}{\mathbb{P}(\mb{Z}\mid\mb{X})}$, where  $\mc{X} = \{\mb{X}: \mb{X} \text{~can be generated from some ~} \mb{U} \in \mbb{F}_2^{k \times w}\}$.
Recall that the outer channel treats  each row of $\mb{X}$ and $\mb{Z}$ as a unit/symbol.
It is straightforward to know that $\wh{\mb{X}}_{\text{ML}}$ must be the encoded matrix that has the least symbol-level Hamming distance to $\mb{Z}$ if channel-2 does not randomly shuffle the rows of $\mb{Y}$, i.e.,  $\wh{\mb{X}}_{\text{ML}}$  has the least symbol-level Hamming distance to $\mb{Y}$.
Therefore, with the random shuffling of $\mb{Y}$, $\wh{\mb{X}}_{\text{ML}}$ is likely to be the encoded matrix that has the largest number of same symbols to $\mb{Z}$ (viewing each matrix as a set consisting of its symbols), i.e., $\wh{\mb{X}}_{\text{ML}} = \arg\max_{\mb{X} \in \mc{X}}{|\mb{X}\cap\mb{Z}|}$.
The following proposition formally states the ML decoder.

\begin{proposition}\label{prop: ML}
Suppose that the encoder in Fig. 3 is adopted and the encoded matrix $\mb{X}$ is sampled from $\mc{X}$ with equal probability.
Then,  if any two non-empty rows of the outer channel output $\mb{Z}$  are distinct, the ML decoder satisfies
\begin{equation}\label{eqn: ML decoder}
\wh{\mb{X}}_{\text{ML}} =  \arg\max_{\mb{X} \in \mc{X}}{|\mb{X}\cap\mb{Z}|}.
\end{equation}
\end{proposition}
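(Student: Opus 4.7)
The plan is to reduce the ML criterion $\mbb{P}(\mb{Z}\mid\mb{X})$ to a quantity depending on $\mb{X}$ only through $M\triangleq|\mb{X}\cap\mb{Z}|$, and to verify that this quantity is strictly increasing in $M$. Because the prior on $\mc{X}$ is uniform, the ML decoder maximizes $\mbb{P}(\mb{Z}\mid\mb{X})$; hence strict monotonicity in $M$ immediately yields \eqref{eqn: ML decoder}.

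First I would marginalize over the random row permutation induced by channel-2. Using the row-wise independence of channel-1 and \eqref{eqn: channel-1},
\begin{equation*}
\mbb{P}(\mb{Z}\mid\mb{X})=\frac{1}{n!}\sum_{\sigma\in S_n}\prod_{i=1}^{n}\mbb{P}(\mb{z}_i\mid\mb{x}_{\sigma(i)}).
\end{equation*}
Let $m$ be the number of non-empty rows of $\mb{Z}$ and $\mc{I}_m$ their index set. For $i\notin\mc{I}_m$ the factor equals $p_e$ regardless of $\sigma(i)$, contributing a constant $p_e^{n-m}$ and, once $\sigma$ is collapsed to its restriction $\tau:\mc{I}_m\to[n]$ (an injection), an extension count of $(n-m)!$. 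Writing $q\triangleq p_s/(2^l-1)$ and $A(\tau)\triangleq|\{i\in\mc{I}_m:\mb{x}_{\tau(i)}=\mb{z}_i\}|$, each remaining factor equals $p_c$ or $q$, so the inner product simplifies to $p_c^{A(\tau)}q^{m-A(\tau)}$.

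Next I exploit distinctness. The unique addresses appended in Step 2 of the encoder make the rows of $\mb{X}$ distinct, and by hypothesis the non-empty rows of $\mb{Z}$ are distinct; hence each non-empty $\mb{z}_i$ matches at most one row of $\mb{X}$, so $A(\tau)\le M$ for every $\tau$. Setting $\beta\triangleq p_c/q>1$ (by \eqref{eqn: pc > ps}), I would use $\beta^{A(\tau)}=\prod_{i\in\mc{I}_m}(1+(\beta-1)\mathbf{1}[\mb{x}_{\tau(i)}=\mb{z}_i])$, expand, and swap the order of summation over $\tau$ and over subsets $S\subseteq\mc{I}_m$. Only subsets $S$ contained in the $M$ positions at which $\mb{z}_i$ is a row of $\mb{X}$ contribute, and for each such $S$ the number of injective extensions of the values forced on $S$ equals $(n-|S|)!/(n-m)!$. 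Collecting everything gives
\begin{equation*}
\mbb{P}(\mb{Z}\mid\mb{X})=K\sum_{s=0}^{M}\binom{M}{s}(\beta-1)^{s}(n-s)!,
\end{equation*}
where $K>0$ depends only on $\mb{Z}$ and the channel parameters.

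Finally I would show that $g(M)\triangleq\sum_{s=0}^{M}\binom{M}{s}(\beta-1)^{s}(n-s)!$ is strictly increasing in $M$. By Pascal's identity $\binom{M+1}{s}-\binom{M}{s}=\binom{M}{s-1}$, a one-line calculation yields $g(M+1)-g(M)=(\beta-1)\sum_{t=0}^{M}\binom{M}{t}(\beta-1)^{t}(n-1-t)!$, which is strictly positive since $\beta>1$. This establishes the required monotonicity, so \eqref{eqn: ML decoder} follows. The main obstacle I expect is the middle step: the double sum over permutations and match patterns must collapse cleanly to a function of $M$ alone, and this collapse relies crucially on both distinctness hypotheses, without which a single $\mb{z}_i$ could match several $\mb{x}_j$'s and the clean $\binom{M}{s}$ weighting above would break down.
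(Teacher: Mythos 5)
Your proof is correct, and it takes a genuinely different route from the paper's in its core step. Both arguments begin the same way: marginalize over the channel-2 permutation, factor out the erased rows, and observe that the likelihood reduces to $\sum_{\tau}\theta^{A(\tau)}$ with $\theta=p_c(2^l-1)/p_s>1$ by \eqref{eqn: pc > ps} (your $\beta$; note this clashes notationally with the paper's $\beta=l/\log_2 n$, so you should rename it). The paper then proceeds by a \emph{comparison} argument: for two candidates $\mb{X}_1,\mb{X}_2$ it rearranges rows so that $N_c$ becomes the number of fixed points of $\pi$ among the first $r_j=|\mb{X}_j\cap\mb{Z}|$ indices, and concludes by pointwise domination of the summands. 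You instead derive a \emph{closed form}: expanding $\theta^{A(\tau)}=\prod_i(1+(\theta-1)\mathbf{1}[\cdot])$ and counting injective extensions gives $\mbb{P}(\mb{Z}\mid\mb{X})=K\sum_{s=0}^{M}\binom{M}{s}(\theta-1)^{s}(n-s)!$ with $M=|\mb{X}\cap\mb{Z}|$ and $K$ independent of $\mb{X}$, after which monotonicity follows from Pascal's identity. I checked the extension count $(n-|S|)!/(n-m)!$ and the difference identity $g(M+1)-g(M)=(\theta-1)\sum_{t=0}^{M}\binom{M}{t}(\theta-1)^{t}(n-1-t)!>0$; both are right, and you correctly flag that the two distinctness hypotheses (unique addresses in $\mb{X}$, distinct non-empty rows of $\mb{Z}$) are exactly what make the forced partial map injective and the subset count equal to $\binom{M}{s}$ --- the same place where the paper's Eqn.\ \eqref{eqn: Nc ri} uses them. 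Your version buys an explicit likelihood formula (hence a quantitative handle on how much more likely one candidate is than another), at the cost of an inclusion--exclusion computation; the paper's version is shorter but only yields the ordering. The only degenerate case, $p_s=0$ where $\theta$ is undefined, afflicts both proofs equally and is trivial to handle separately.
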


\begin{IEEEproof}
For any permutation $\sigma: [n] \to [n]$ and any matrix $\mb{A} = [\mb{a}_i^{\mr{T}}]_{1 \leq i \leq n}^{\mr{T}}$, with a little abuse of notation, denote $\sigma(\mb{A}) \triangleq [\mb{a}_{\sigma(i)}^{\mr{T}}]_{1 \leq i \leq n}^{\mr{T}}$, i.e., for $i \in [n]$, the $i$-th row of $\sigma(\mb{A})$ is the $\sigma(i)$-th row of $\mb{A}$.
Moreover, for any matrix $\mb{B} = [\mb{b}_i^{\mr{T}}]_{1 \leq i \leq n}^{\mr{T}}$, denote $N_c(\mb{A}, \mb{B}) \triangleq |\{i \in [n]: \mb{b}_{i} = \mb{a}_i\}|$, $N_e(\mb{A}, \mb{B}) \triangleq |\{i \in [n]: \mb{b}_{i} = ?\}|$, and $N_s(\mb{A}, \mb{B}) \triangleq |\{i \in [n]: \mb{b}_{i} \notin \{ \mb{a}_i, ?\}\}|$.
Denote $\pi$ as the permutation introduced by channel-2, i.e., for $i \in [n]$, the $i$-th row $\mb{z}_i$ in $\mb{Z}$ corresponds to the transmission of the $\pi(i)$-th row $\mb{x}_{\pi(i)}$ in $\mb{X}$.
Accordingly,  $\pi$ takes any permutation with an equal probability of $\mathbb{P}(\pi) = 1/n!$.
Since $N_c(\pi(\mb{X}), \mb{Z}) + N_e(\pi(\mb{X}), \mb{Z}) + N_s(\pi(\mb{X}), \mb{Z}) = n$ and $N_e(\pi(\mb{X}), \mb{Z})$ is fixed for a given $\mb{Z}$, we have
\begin{align}\label{eqn: Xml = sumpi}
 &\wh{\mb{X}}_{\text{ML}} = \arg\max_{\mb{X} \in \mc{X}}{\mathbb{P}(\mb{Z}\mid\mb{X})} \nonumber
\\= &\arg\max_{\mb{X} \in \mc{X}}{\sum_{\pi}{\mathbb{P}(\mb{Z}\mid\mb{X}, \mb{\pi})\mathbb{P}(\pi)}} \nonumber
\\= &\arg\max_{\mb{X} \in \mc{X}}{\sum_{\pi}{\mathbb{P}(\mb{Z}\mid\mb{X}, \mb{\pi})}} \nonumber
\\= &\arg\max_{\mb{X} \in \mc{X}}{\sum_{\pi}{\prod_{i}} \mbb{P}{(\mb{z}_i\mid \mb{x}_{\pi(i)})}} \nonumber
\\= &\arg\max_{\mb{X} \in \mc{X}}{\sum_{\pi}{p_c^{N_c(\pi(\mb{X}), \mb{Z})}} p_e^{N_e(\pi(\mb{X}), \mb{Z})} \left(\frac{p_s}{2^l - 1}\right)^{N_s(\pi(\mb{X}), \mb{Z})} } \nonumber
\\= &\arg\max_{\mb{X} \in \mc{X}} \sum_{\pi} \left(\frac{p_c(2^l - 1)}{p_s}\right)^{N_c(\pi(\mb{X}), \mb{Z})}.
\end{align}

Given \eqref{eqn: Xml = sumpi}, for showing \eqref{eqn: ML decoder}, it suffices to prove for any $\mb{X}_1,  \mb{X}_2 \in \mc{X}$,
\begin{equation*}\label{eqn: sumpi theta}
\sum_{\pi} \theta^{N_c(\pi(\mb{X}_1), \mb{Z})} > \sum_{\pi} \theta^{N_c(\pi(\mb{X}_2), \mb{Z})}
\Leftrightarrow |\mb{X}_1 \cap \mb{Z}| > |\mb{X}_2 \cap \mb{Z}|,
\end{equation*}
where $\theta \triangleq {p_c(2^l - 1)}/{p_s}$ and $\theta > 1$ by \eqref{eqn: pc > ps}.
Set $r_j=|\mb{X}_j\cap \mb{Z}|$, $j=1,2$.
Denote $\sigma_j$ and $\tau_j$ as rows permutations such that the first $r_j$ rows of $\sigma_j(\mb{X}_j)$ and $\tau_j(\mb{Z})$ are identical.
Note that $\sum_{\pi} \theta^{N_c(\pi(\mb{X}_j), \mb{Z})} = \sum_{\pi} \theta^{N_c(\tau_j \circ \pi \circ \sigma_j^{-1} \circ \sigma_j(\mb{X}_j), \tau_j(\mb{Z}))} = \sum_{\pi} \theta^{N_c( \pi \circ \sigma_j(\mb{X}_j), \tau_j(\mb{Z}))}$, where $\circ$ is the composition of two permutations,  $\sigma_j^{-1}$ is the inverse permutation of $\sigma_j$, and the last equality is because both $\pi$ and $\tau_j \circ \pi \circ \sigma_j^{-1}$ range over all the row permutations.
Thus, it is equivalent to proving
\begin{equation}\label{eqn: sumpi r}
\sum_{\pi} \theta^{N_c( \pi \circ \sigma_1(\mb{X}_1), \tau_1(\mb{Z}))} > \sum_{\pi} \theta^{N_c( \pi \circ \sigma_2(\mb{X}_2), \tau_2(\mb{Z}))}
\Leftrightarrow r_1 > r_2.
\end{equation}

Clearly, for any $\pi$,
\begin{equation}\label{eqn: Nc ri}
N_c( \pi \circ \sigma_j(\mb{X}_j), \tau_j(\mb{Z})) = |\{i \in [r_j]: i = \pi(i)\}|,
\end{equation}
since $|\sigma_j(\mb{X}_j) \cap \tau_j(\mb{Z})| = r_j$ and any two non-empty rows of  $\sigma_j(\mb{X}_j)$ or $\tau_j(\mb{Z})$  are distinct.
By \eqref{eqn: Nc ri}, on the one hand, $\sum_{\pi} \theta^{N_c( \pi \circ \sigma_1(\mb{X}_1), \tau_1(\mb{Z}))} > \sum_{\pi} \theta^{N_c( \pi \circ \sigma_2(\mb{X}_2), \tau_2(\mb{Z}))}
\Rightarrow
\exists \pi, N_c( \pi \circ \sigma_1(\mb{X}_1), \tau_1(\mb{Z})) > N_c( \pi \circ \sigma_2(\mb{X}_2), \tau_2(\mb{Z}))
\Rightarrow
r_1 > r_2$.
On the other hand, $r_1 > r_2 \Rightarrow
\forall \pi, N_c( \pi \circ \sigma_1(\mb{X}_1), \tau_1(\mb{Z})) > N_c( \pi \circ \sigma_2(\mb{X}_2), \tau_2(\mb{Z}))$ and $N_c( \pi' \circ \sigma_1(\mb{X}_1), \tau_1(\mb{Z})) > N_c( \pi' \circ \sigma_2(\mb{X}_2), \tau_2(\mb{Z}))$
where $\pi'$ is the identity permutation
$\Rightarrow
\sum_{\pi} \theta^{N_c( \pi \circ \sigma_1(\mb{X}_1), \tau_1(\mb{Z}))} > \sum_{\pi} \theta^{N_c( \pi \circ \sigma_2(\mb{X}_2), \tau_2(\mb{Z}))}$.
As a result, \eqref{eqn: sumpi r} holds, which completes the proof of Proposition \ref{prop: ML}.
\end{IEEEproof}

Proposition \ref{prop: ML} requires that any two non-empty rows of $\mb{Z}$ are distinct, otherwise \eqref{eqn: Nc ri} may not necessarily hold.
For any distinct $i, j \in [n]$, the probability $p_{ij}$ of that transmitting $\mb{x}_i$ and $\mb{x}_j$ leads to two identical non-empty rows in $\mb{Z}$ is $p_{ij} = p_c \frac{p_s}{ 2^l - 1} + \frac{p_s}{ 2^l - 1} p_c + (2^l - 2)\frac{p_s}{ 2^l - 1} \frac{p_s}{ 2^l - 1}$.
Then, by the union bound, the probability $p$ of that $\mb{Z}$ has at least two identical non-empty rows is $p \leq \sum_{1 \leq i < j \leq n} p_{ij} = \frac{n(n-1)}{2 (2^l - 1)} \left(2p_c p_s + \frac{(2^l - 2) p_s^2}{ 2^l - 1}\right) < \frac{n^2}{2^l} \left(2p_c p_s + p_s^2\right) = n^{2-\beta} \left(2p_c p_s + p_s^2\right)$.
Thus, if $\beta = l/\log_2 n > 2$, $1 - p$ generally approaches $1$,  implying that the event where any two non-empty rows of $\mb{Z}$ are distinct occurs almost surely.
It should be noted that even when this event does not hold in certain cases such that the right-hand side (RHS) of \eqref{eqn: ML decoder} might not strictly qualify as an ML decoder, the RHS is proven to be asymptotically optimal.
This is evidenced by its application (with minor adaptations) in Section \ref{section: capacity} to establish the achievability for proving the outer channel capacity.
However, the RHS of \eqref{eqn: ML decoder} needs to enumerate all $\mb{X} \in \mc{X}$, leading to a prohibitively high complexity of $O(2^{kw})$.
Therefore, it is only of theoretical interest, while we need polynomial-time efficient decoders for practical scenarios, such as those developed in Section \ref{section: Decoding Scheme}.

\section{Proof of Theorem \ref{theorem: capacity}}\label{section: capacity}

Recall that $\beta = \frac{l}{\log_2 n}$ and the code rate of the system in Fig. \ref{fig: Channel model} is $R = \frac{kw}{nl}$.
According to \cite{ shomorony2021DNA}, the outer channel has capacity $C = 0$ for $\beta \leq 1$.
Therefore, to prove Theorem \ref{theorem: capacity}, we only need to further show the following two lemmas.

\begin{lemma}[achievability]\label{lemma: achievable}
For $\beta > 1$ and arbitrarily small fixed $\epsilon > 0$, the following code rate is achievable:
\begin{equation}
    R = (p_c - 2 \epsilon) (1 - 1/\beta).
\end{equation}
That is, there is a family of codes with rate $R = (p_c - 2 \epsilon) (1 - 1/\beta)$ (setting $k = n(p_c - 2\epsilon)$ and $w = l(1 - 1/\beta)$) such that the decoding error probability $\mbb{P}(\wh{\mb{U}} \neq \mb{U}) \to 0$ as $n \to \infty$.
\end{lemma}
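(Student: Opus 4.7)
The plan is a random-coding argument. I will let the ECC in Step 1 of the encoder (Fig.~\ref{fig: encoding}) be a uniformly random $(n, k)$ linear code over $\mathbb{F}_{2^w}$ with $k = n(p_c - 2\epsilon)$ and $w = l(1 - 1/\beta)$, so the overall rate is exactly $R = (p_c - 2\epsilon)(1 - 1/\beta)$. For decoding I adopt the ML rule of Proposition~\ref{prop: ML}, namely $\hat{\mb{X}} = \arg\max_{\mb{X} \in \mc{X}} |\mb{X} \cap \mb{Z}|$, breaking ties in favor of the transmitted $\mb{X}$. It suffices to show $\mbb{E}_{\text{code}}[\mbb{P}(\text{error})] \to 0$ as $n \to \infty$; a deterministic code of the same rate then follows by the standard existence argument.

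The analysis has two ingredients. First, the unique addresses make the rows of $\mb{X}$ pairwise distinct, so every correctly transmitted row satisfies $\mb{y}_j = \mb{x}_j \in \mc{R}(\mb{X})$, yielding $T := |\mb{X} \cap \mb{Z}| \geq N_c \sim \text{Binomial}(n, p_c)$ and hence $\mbb{P}(T < n(p_c - \epsilon)) \leq e^{-2 n \epsilon^2}$ by Hoeffding. Second, I will bound the expected number of ``confusing'' codewords $\mb{V}' \neq \mb{V}$, meaning those with $T' := |\mb{X}' \cap \mb{Z}| \geq n(p_c - \epsilon)$. The key fact is that for a uniformly random generator matrix, every non-transmitted codeword is marginally uniform over $\mathbb{F}_{2^w}^n$, so $\mbb{E}_{\text{code}}[\#\text{confusing}] \leq 2^{wk}\,\mbb{P}_{\text{unif.}}(T' \geq n(p_c - \epsilon))$. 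For a uniform $\mb{V}' \in \mathbb{F}_{2^w}^n$, $T' = \sum_j \mbb{1}[\mb{v}'_j \in D_j]$ is a sum of independent Bernoulli$(|D_j|/2^w)$ variables---where $D_j$ denotes the set of $w$-bit data values received at address $j$---and since each non-erased row of $\mb{Z}$ contributes to at most one $D_j$, one has $\sum_j |D_j| \leq n$, giving $\mbb{E}[T'] \leq n/2^w = n^{2-\beta} = o(n)$ because $\beta > 1$.

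Hence $n(p_c - \epsilon)$ lies deep in the large-deviation regime for $T'$, and a Chernoff bound for independent Bernoulli sums yields $\mbb{P}_{\text{unif.}}(T' \geq n(p_c - \epsilon)) \leq 2^{-n(p_c - \epsilon) w (1 - o(1))}$; substituting produces the crucial cancellation $\mbb{E}_{\text{code}}[\#\text{confusing}] \leq 2^{w(k - n(p_c - \epsilon))(1 - o(1))} = 2^{-\epsilon n w (1 - o(1))}$, which tends to $0$ because $w = (\beta - 1)\log_2 n \to \infty$. Combining with the concentration of $T$ via Markov's inequality gives $\mbb{E}_{\text{code}}[\mbb{P}(\text{error})] \to 0$, completing the proof. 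The hardest step I anticipate will be rigorously justifying the marginal-uniformity inequality: conditioning on the transmitted codeword $\mb{V} = \mb{U}_0 G$, I would use that for any $\mb{U}' \in \mathbb{F}_{2^w}^k$ linearly independent of $\mb{U}_0$ the pair $(\mb{U}_0 G, \mb{U}' G)$ is uniform on $\mathbb{F}_{2^w}^n \times \mathbb{F}_{2^w}^n$ (so that $\mb{U}' G$ is uniform and independent of $\mb{V}$), and separately handle the $O(2^w)$ many $\mathbb{F}_{2^w}$-scalar multiples of $\mb{U}_0$, which contribute only a lower-order correction.
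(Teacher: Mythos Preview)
Your proposal is correct and follows essentially the same approach as the paper: a random $(n,k)$ linear code over $\mathbb{F}_{2^w}$, a threshold decoder based on $|\mb{X}\cap\mb{Z}|$, Hoeffding for the transmitted codeword, and a union bound over the $2^{kw}$ competitors using that each row of a competing $\mb{V}'$ hits a prescribed target with probability $2^{-w}$. The only tactical difference is that the paper bounds $\mbb{P}(T'\geq m)$ by a direct union bound over the $\binom{n}{m}\leq 2^n$ size-$m$ subsets of $\mb{Z}$, obtaining $\mbb{P}(E_i)\leq 2^{n-wn(p_c-\epsilon)}$ immediately, which is simpler than your Chernoff route; and the paper's Claim~1 is stated as an \emph{unconditional} probability over $\mb{G}$ (for any $\mb{U}_{i_1}\neq\mb{U}_{i_2}$, $\mbb{P}(\mb{g}_{j_1}\mb{U}_{i_1}=\mb{g}_{j_2}\mb{U}_{i_2})=2^{-w}$), which sidesteps the scalar-multiple issue you flagged without needing a separate case analysis.
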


\begin{lemma}[converse]\label{lemma: converse}
For $\beta > 1$ and arbitrarily small fixed $\delta > 0$, as $n \to \infty$, any achievable code rate $R$ satisfies
\begin{equation}
    R \leq (p_c + \delta) (1 - 1/\beta).
\end{equation}
\end{lemma}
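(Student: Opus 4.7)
The plan is a genie argument that reduces the outer channel (for $\beta > 1$) to the Bernoulli-sampling noise-free shuffling-sampling channel of \cite{shomorony2021DNA}, after which \eqref{eqn: Cnf} yields the bound. Let $G = (G_i)_{i \in [n]}$ with $G_i \in \{c, e, s\}$ record whether $\mb{z}_i$ was correctly received, erased, or substituted; by construction $G$ is i.i.d.\ with marginal $(p_c, p_e, p_s)$ and is independent of $\mb{X}$ (since channel-1's state for each row is independent of that row's content, and channel-2 is independent of $\mb{X}$). Fano's inequality applied along the Markov chain $\mb{U} \to \mb{X} \to \mb{Z} \to \wh{\mb{U}}$ gives $R \le I(\mb{X}; \mb{Z})/(nl) + o(1)$ for any family of codes with $\mbb{P}(\wh{\mb{U}} \neq \mb{U}) \to 0$, and revealing the $\mb{X}$-independent $G$ to the decoder only enlarges the mutual information: $I(\mb{X}; \mb{Z}) \le I(\mb{X}; \mb{Z} \mid G)$.

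Next, I would turn every substituted row into an erasure. Let $\mb{Z}'$ be obtained from $\mb{Z}$ by overwriting each substituted row with $?$, and additionally reveal to the decoder the preimage $\pi(i)$ under the channel-2 permutation $\pi$ of every substituted position $i$; this extra side information is independent of $\mb{X}$ conditional on $G$, so it too costs nothing. Conditional on $G$ and these preimages, the substituted rows are mutually independent and each is uniform on $\mbb{F}_2^l \setminus \{\mb{x}_{\pi(i)}\}$. A direct entropy calculation then bounds the mutual information about $\mb{X}$ carried by the substituted contents by $\mbb{E}[n_s] \cdot (l - \log_2(2^l - 1)) = np_s \cdot O(2^{-l}) = O(n^{1-\beta})$, which is $o(nl)$ for $\beta > 1$. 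Hence $I(\mb{X}; \mb{Z} \mid G) \le I(\mb{X}; \mb{Z}' \mid G) + o(nl)$.

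After this reduction, $(\mb{X}, G) \mapsto \mb{Z}'$ is information-theoretically equivalent to the Bernoulli-$p_c$ noise-free shuffling-sampling channel of \cite{shomorony2021DNA}: each row of $\mb{X}$ is independently delivered with probability $p_c$, the delivered rows appear in a uniformly random order among the $n$ slots, and the positions of the erasures are determined by $G$ alone and carry no information about $\mb{X}$. Substituting $p_{\text{erasure}} = 1 - p_c$ into \eqref{eqn: Cnf} therefore caps $I(\mb{X}; \mb{Z}' \mid G)/(nl) \le p_c(1 - 1/\beta) + o(1)$. Chaining all bounds yields $R \le p_c(1 - 1/\beta) + o(1) \le (p_c + \delta)(1 - 1/\beta)$ for all sufficiently large $n$, since $\delta(1 - 1/\beta) > 0$ is a fixed positive quantity.

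The main obstacle is the mutual information bound in the second paragraph: substituted contents depend weakly on $\mb{X}$ (uniform on an $\mb{X}$-dependent set of size $2^l - 1$), so one cannot simply discard them without justification. The $O(2^{-l})$ per-row entropy gap, made accessible by the auxiliary genie on $\pi$ at substituted positions, is exactly what makes the discarding cost-free once $\beta > 1$. A smaller concern is that invoking \eqref{eqn: Cnf} requires the coding-theoretic converse of \cite{shomorony2021DNA} rather than a pure mutual information inequality, but since our reduced channel matches their Bernoulli sampling model in distribution, their converse applies verbatim.
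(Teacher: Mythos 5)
Your proposal is correct and follows essentially the same route as the paper: after Fano's inequality, the paper splits $\mb{Z}$ into the substituted rows $\mb{S}$ and the rest $\bar{\mb{S}}$ (your genie on $G$), bounds $I(\mb{X};\bar{\mb{S}})\leq H(\bar{\mb{S}})$ via the noise-free shuffling-sampling converse of \cite{shomorony2021DNA}, and bounds $I(\mb{X};\mb{S}|\bar{\mb{S}})$ by conditioning on the preimage map $\mb{f}$ to get exactly your per-row entropy gap $p_s n\bigl(l-\log_2(2^l-1)\bigr)=o(1)$. The only cosmetic difference is that you phrase the decomposition as side information revealed to the decoder rather than as a chain-rule split of the mutual information.
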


\subsection{Proof of Lemma \ref{lemma: achievable}}

We consider using a family of codes $\mc{C}_n$ with rate $R = (p_c - 2 \epsilon) (1 - 1/\beta)$ in the proof.
WLOG, assume $n(p_c - 2\epsilon)$, $n(p_c - \epsilon)$, and $\log_2 n$ are integers as $n \to \infty$.
The encoding scheme, decoding scheme, and decoding error probability are sequentially illustrated in the following.
For the reader's convenience, we present the Hoeffding's inequality \cite{hoeffding1963probability} below.

\begin{lemma}[Hoeffding's inequality \cite{hoeffding1963probability}]\label{lemma: hoeffding inequality}
Consider a coin that shows heads with probability $p$ and tails with probability $1 - p$.
Toss the coin $t$ times and denote $U(t)$ as the number of times the coin comes up heads.
Then, $\mbb{P}(U(t) \leq (p - \eta)t) \leq e^{-2t\eta^2}$ holds for any fixed $\eta > 0$.
\end{lemma}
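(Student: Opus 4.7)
The plan is to follow the classical Chernoff-bound argument specialized to bounded i.i.d.\ random variables. Write $U(t) = \sum_{i=1}^{t} X_i$, where $X_1, \ldots, X_t$ are i.i.d.\ Bernoulli with $\mbb{P}(X_i = 1) = p$, so that $\mbb{E}[X_i] = p$ and the centered variables $X_i - p$ lie in the interval $[-p,\, 1-p]$ of length $1$. The target event $\{U(t) \leq (p-\eta)t\}$ is equivalent to $\{-(U(t)-pt) \geq \eta t\}$, so for any parameter $s > 0$, Markov's inequality applied to the nonnegative random variable $e^{-s(U(t) - pt)}$ yields
\begin{equation*}
\mbb{P}\bigl(U(t) \leq (p-\eta)t\bigr) \leq e^{-s\eta t}\, \mbb{E}\!\left[e^{-s(U(t)-pt)}\right] = e^{-s\eta t} \prod_{i=1}^{t} \mbb{E}\!\left[e^{-s(X_i - p)}\right],
\end{equation*}
where the last equality uses independence of the $X_i$.

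Next I would control each one-term moment generating function by invoking Hoeffding's lemma, which asserts that any zero-mean random variable $Y$ supported in $[a,b]$ satisfies $\mbb{E}[e^{sY}] \leq e^{s^2(b-a)^2/8}$. Applying it to $Y_i = -(X_i - p) \in [-(1-p),\, p]$, whose range has length $1$, gives $\mbb{E}[e^{-s(X_i - p)}] \leq e^{s^2/8}$. Substituting back produces the uniform bound
\begin{equation*}
\mbb{P}\bigl(U(t) \leq (p-\eta)t\bigr) \leq e^{-s\eta t + s^2 t / 8},
\end{equation*}
valid for every $s > 0$. The last step is to optimize the exponent in $s$: differentiating $-s\eta t + s^2 t/8$ and setting the derivative to zero gives $s^\star = 4\eta$, and plugging this value back yields exponent $-2\eta^2 t$, which is exactly the claimed bound $e^{-2t\eta^2}$.

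The only genuinely nontrivial ingredient is Hoeffding's lemma itself, whose standard proof uses convexity of $y \mapsto e^{sy}$ on $[a,b]$ to dominate $\mbb{E}[e^{sY}]$ by $e^{\psi(s)}$ for an auxiliary function $\psi$ with $\psi(0) = \psi'(0) = 0$ and $\psi''(s) \leq (b-a)^2/4$, after which a second-order Taylor expansion gives $\psi(s) \leq s^2(b-a)^2/8$. Since this auxiliary lemma is a textbook result already attributed in the paper to Hoeffding \cite{hoeffding1963probability}, I would invoke it directly rather than reproving it; the remaining calculation above is essentially just bookkeeping once the lemma is in hand, so no real obstacle arises. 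The only genuine care point is making sure the sign convention (one-sided lower-tail deviation) lines up so that the bound inherits the factor of $2$ rather than $1/2$ in the exponent, which is why the optimization in $s$ needs to be carried out explicitly.
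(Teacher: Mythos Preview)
Your argument is correct and is the standard Chernoff--Hoeffding derivation: exponential Markov inequality, factorization by independence, Hoeffding's lemma for each centered Bernoulli, and optimization in $s$. Nothing is missing.

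However, there is nothing to compare against: the paper does not prove this lemma at all. It is simply quoted from \cite{hoeffding1963probability} ``for the reader's convenience'' and then invoked as a black box in the achievability proof of Lemma~\ref{lemma: achievable} (specifically in \eqref{eqn: FER E1}). So your write-up is not an alternative route to the paper's proof---it is a self-contained proof where the paper offers none. If the intent is to match the paper, a one-line citation suffices; if the intent is to make the manuscript self-contained, your sketch is appropriate and correct as written.
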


\emph{Encoding scheme:}
The encoding scheme in Fig. \ref{fig: encoding} is adopted, where we set $k = n(p_c - 2\epsilon)$ and $w = l - \log_2 n = l(1 - 1/\beta)$.
The first encoding step employs a random $(n, k)$ linear code over $\mbb{F}_{2^w}$.
Specifically, let $\mb{G}$ be a generator matrix taken from $\mbb{F}_{2^w}^{k \times n}$ uniformly at random, i.e., each of the $k \times n$ elements of $\mb{G}$ independently takes a value from $\mbb{F}_{2^w}$ with equal probability.
There are $N \triangleq 2^{kw}$ source data matrices $\mb{U}_1, \mb{U}_2, \ldots, \mb{U}_N$ in $\mbb{F}_2^{k \times w}$ (or equivalently in $\mbb{F}_{2^w}^k$).
For any $i \in [N]$, the first encoding step transforms $\mb{U}_i$ into $\mb{V}_i = \mb{G}^{\mr{T}} \mb{U}_i$, where the multiplication operates over $\mbb{F}_{2^w}$ by regarding each row of $\mb{U}_i$ as an element in $\mbb{F}_{2^w}$.
In this case, the following claim holds.

\begin{claim}\label{claim: 2w}
For any $1 \leq i_1 < i_2 \leq N$, a row of $\mb{V}_{i_1}$ equals any given row of $\mb{V}_{i_2}$ with probability $2^{-w}$.
\end{claim}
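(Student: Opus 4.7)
The plan is to reduce the claim to a standard uniformity statement for a single $\mbb{F}_{2^w}$-valued linear form, by re-expressing the rows of $\mb{V}_i$ in terms of the columns of $\mb{G}$. Let $\mb{g}_r\in\mbb{F}_{2^w}^k$ denote the $r$-th column of $\mb{G}$; by the sampling of $\mb{G}$, the vectors $\mb{g}_1,\ldots,\mb{g}_n$ are i.i.d.\ uniform on $\mbb{F}_{2^w}^k$. Viewing $\mb{U}_i$ as a column $\mb{u}_i\in\mbb{F}_{2^w}^k$, the $r$-th row of $\mb{V}_i=\mb{G}^{\mr{T}}\mb{U}_i$ is simply the scalar $\mb{g}_r^{\mr{T}}\mb{u}_i\in\mbb{F}_{2^w}$. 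Fixing any row indices $r_1,r_2\in[n]$, the target probability becomes $\mbb{P}(R=0)$, where $R\triangleq \mb{g}_{r_1}^{\mr{T}}\mb{u}_{i_1}-\mb{g}_{r_2}^{\mr{T}}\mb{u}_{i_2}$, so it suffices to prove that $R$ is uniformly distributed on $\mbb{F}_{2^w}$.

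Next, I would split on whether $r_1=r_2$. If $r_1=r_2$, then $R=\mb{g}_{r_1}^{\mr{T}}(\mb{u}_{i_1}-\mb{u}_{i_2})$; because $i_1\neq i_2$ forces $\mb{U}_{i_1}\neq\mb{U}_{i_2}$, the vector $\mb{u}_{i_1}-\mb{u}_{i_2}$ has at least one nonzero coordinate, say in position $j$, so $R$ depends nontrivially on the uniform entry $g_{j,r_1}$ and marginalizing over that entry makes $R$ uniform on $\mbb{F}_{2^w}$. If $r_1\neq r_2$, then $\mb{g}_{r_1}$ and $\mb{g}_{r_2}$ are independent; since $i_1\neq i_2$ guarantees that at least one of $\mb{u}_{i_1},\mb{u}_{i_2}$ is nonzero, I pick a coordinate $j$ in which the corresponding $\mb{u}$ is nonzero, isolate the term $g_{j,r_1}u_{i_1,j}$ (or its $r_2$-counterpart), and marginalize over that single uniform $\mbb{F}_{2^w}$-entry to conclude $R$ is uniform.

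The only mild obstacle is the edge case in which one of $\mb{u}_{i_1},\mb{u}_{i_2}$ happens to be the all-zero message, making the associated row of $\mb{V}$ identically zero and contributing no randomness; the remedy is simply that $i_1\neq i_2$ rules out both being zero simultaneously, so the surviving nonzero message supplies the required uniformity. Apart from this case analysis, the whole argument rests on the standard fact that a nontrivial $\mbb{F}_q$-linear combination of independent uniform $\mbb{F}_q$-valued random variables is itself uniform on $\mbb{F}_q$, which applied here immediately yields $\mbb{P}(R=0)=2^{-w}$.
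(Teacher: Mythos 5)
Your proof is correct and follows essentially the same route as the paper: write each row of $\mb{V}_i$ as a linear form $\mb{g}\,\mb{U}_i$ in the uniformly random rows of $\mb{G}^{\mr{T}}$ and invoke uniformity of a nontrivial $\mbb{F}_{2^w}$-linear combination. The only difference is that you spell out the case split ($r_1=r_2$ versus $r_1\neq r_2$, and the all-zero message edge case) that the paper's one-line argument leaves implicit.
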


\begin{IEEEproof}
For $t \in [2]$ and $j_t \in [n]$, denote $\mb{v}_{j_t}$ as the $j_t$-th row of $\mb{V}_{i_t}$.
Then $\mb{v}_{j_t} = \mb{g}_{j_t} \mb{U}_{i_t}$, where $\mb{g}_{j_t}$ denotes the $j_t$-th row in $\mb{G}^{\mr{T}}$.
Since $\mb{U}_{i_1}$ differs from  $\mb{U}_{i_2}$ and both $\mb{g}_{j_1}$ and $\mb{g}_{j_2}$ are taken from $\mbb{F}_{2^w}^{k}$ uniformly at random, the probability of $\mb{v}_{j_1} = \mb{v}_{j_2}$ for any given $j_1, j_2 \in [n]$  is $2^{-w}$ , implying that Claim 1 holds.
\end{IEEEproof}

Next, the second encoding step transforms $\mb{V}_i$ into $\mb{X}_i \in \mbb{F}_2^{n \times l}$ by adding unique addresses.
Finally, let $\{\mb{X}_1, \mb{X}_2, \ldots, \mb{X}_N\}$ form the codebook of $\mc{C}_n$ such that $\mc{C}_n$  has rate $R = \log_2 N/(nl) = kw/(nl) = (p_c - 2 \epsilon) (1 - 1/\beta)$.
The left task it to illustrate the decoding scheme and further show that the decoding error probability tends to zero as $n \to \infty$.

\color{black}

\emph{Decoding scheme:}
After receiving $\mb{Z}$ from the outer channel, if there exists a unique $i \in [N]$ such that $| \mb{X}_i \cap \mb{Z}| \geq n(p_c - \epsilon)$,
then return $\wh{\mb{U}} = \mb{U}_i$ as the decoding result; otherwise, return $\wh{\mb{U}} = ?$ to indicate a decoding failure.
This decoder, which determines the decoding result based on the threshold of $n(p_c - \epsilon)$, can be considered as a simple variant of the ML decoder given by Proposition \ref{prop: ML}.

\emph{Decoding error probability:}
WLOG, assume the source data matrix $\mb{U} = \mb{U}_1$ and the transmitted encoded matrix $\mb{X} = \mb{X}_1$.
For any $i \in [N]$, define $E_i$ as the event of $| \mb{X}_i \cap \mb{Z}| \geq n(p_c - \epsilon)$ and $\bar{E}_i$ as the event of $| \mb{X}_i \cap \mb{Z}| < n(p_c - \epsilon)$.
Then, the decoding error probability is given by
\begin{align}\label{eqn: FER}
    \mbb{P}(\wh{\mb{U}} \neq \mb{U}_1) &= \mbb{P}(\bar{E}_1 \vee E_2 \vee E_3 \vee \cdots \vee E_{N}) \nonumber
    \\&\leq \mbb{P}(\bar{E}_1) + \sum_{i = 2}^{N} \mbb{P}(E_i).
\end{align}

We first bound $\mbb{P}(\bar{E}_1)$.
Let $n_c$ denote the number of correctly transmitted rows of $\mb{X}_1$.
We have $|\mb{X}_1 \cap \mb{Z}| \geq n_c$.
Then,
\begin{align}\label{eqn: FER E1}
    \mbb{P}(\bar{E}_1) &= \mbb{P}(| \mb{X}_1 \cap \mb{Z}| < n(p_c - \epsilon)) \nonumber
    \\&\leq  \mbb{P}(n_c \leq n(p_c - \epsilon)) \nonumber
    \\&\leq e^{-2n\epsilon^2},
\end{align}
where the last inequality is based on Lemma \ref{lemma: hoeffding inequality}.

We now bound $\mbb{P}(E_i)$ for any $2 \leq i \leq N$.
It is important to note the following claim.

\begin{claim}\label{claim: Ei}
$E_i$ happens if and only if (iff) there exists at least a submatrix of $\mb{Z}$, say $\mb{S}$, such that $\mb{S}$ consists of $n(p_c - \epsilon)$ rows of $\mb{Z}$ and $ |\mb{X}_i \cap \mb{S}| = n(p_c - \epsilon)$.
\end{claim}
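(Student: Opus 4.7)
The plan is to establish Claim \ref{claim: Ei} as a straightforward set-theoretic equivalence, being careful throughout about the convention (stated in the Notations subsection) that $\mc{R}(\cdot)$ collapses repeated rows into a standard set. Let me abbreviate $m \triangleq n(p_c - \epsilon)$; the event $E_i$ is $|\mc{R}(\mb{X}_i) \cap \mc{R}(\mb{Z})| \geq m$, and I need this to be equivalent to the existence of an $m$-rowed submatrix $\mb{S}$ of $\mb{Z}$ with $|\mc{R}(\mb{X}_i) \cap \mc{R}(\mb{S})| = m$.

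For the ``only if'' direction, I would start by assuming $E_i$ holds, so that the intersection $\mc{R}(\mb{X}_i) \cap \mc{R}(\mb{Z})$ contains at least $m$ distinct vectors; pick any $m$ of them, call them $\mb{v}_1, \ldots, \mb{v}_m$. For each $\mb{v}_j$, select one row of $\mb{Z}$ that equals $\mb{v}_j$, and stack these $m$ rows to form $\mb{S}$. Because the $\mb{v}_j$'s are distinct, the chosen rows of $\mb{S}$ are pairwise distinct, hence $\mc{R}(\mb{S}) = \{\mb{v}_1, \ldots, \mb{v}_m\}$, and by construction this entire set lies in $\mc{R}(\mb{X}_i)$, yielding $|\mb{X}_i \cap \mb{S}| = m$.

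For the ``if'' direction, suppose such an $\mb{S}$ exists. Since $\mb{S}$ contains exactly $m$ rows, $|\mc{R}(\mb{S})| \leq m$; together with the hypothesis $|\mc{R}(\mb{X}_i) \cap \mc{R}(\mb{S})| = m$, this forces $|\mc{R}(\mb{S})| = m$ and $\mc{R}(\mb{S}) \subseteq \mc{R}(\mb{X}_i)$. Using that $\mb{S}$ is a submatrix of $\mb{Z}$, we have $\mc{R}(\mb{S}) \subseteq \mc{R}(\mb{Z})$, so $\mc{R}(\mb{S}) \subseteq \mc{R}(\mb{X}_i) \cap \mc{R}(\mb{Z})$, giving $|\mb{X}_i \cap \mb{Z}| \geq m$ and therefore $E_i$.

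I do not anticipate any substantive obstacle: the statement is essentially a counting tautology about set intersections, and the only subtle point is remembering that $\mc{R}(\cdot)$ discards duplicates, which is exactly what forces the $m$-rowed submatrix to have pairwise distinct rows in both directions of the argument. The payoff of writing the claim in this form is that it converts the bound $|\mb{X}_i \cap \mb{Z}| \geq m$ into an existence statement over $\binom{n}{m}$ candidate submatrices, enabling a clean union bound on $\mbb{P}(E_i)$ in the next step of the achievability proof.
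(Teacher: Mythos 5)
Your proof is correct; the paper states Claim~\ref{claim: Ei} without proof, treating it as immediate, and your two-direction set-theoretic argument is exactly the elementary reasoning it implicitly relies on. The one point you add beyond what the paper makes explicit---that $|\mc{R}(\mb{S})|\le m$ for an $m$-rowed submatrix forces $\mc{R}(\mb{S})$ to have exactly $m$ distinct rows all lying in $\mc{R}(\mb{X}_i)$---is handled correctly and is consistent with the paper's convention that $\cap$ acts on row sets with duplicates discarded.
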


Claim \ref{claim: Ei} motivates us to define a class of events $E_{i,j}$ related to the set intersection of $\mb{X}_i$ and the submatrices of $\mb{Z}$ , and finally bound $\mbb{P}(E_i)$ by the summation of $\mbb{P}(E_{i,j})$.
Specifically, recall that $\mb{Z}$  is considered as having $n$ rows with any erased row represented by a `?'.
Thus, there are $S \triangleq \binom{n}{n(p_c - \epsilon)}$ many ways to select $n(p_c - \epsilon)$ rows from $\mb{Z}$ to form submatrices.
Denote these submatrices  by $\mb{S}_j, j \in [S]$.
Define  $E_{i,j}$ as the event of $ |\mb{X}_i \cap \mb{S}_j| = n(p_c - \epsilon)$.
Claim \ref{claim: Ei} yields
\begin{align*}
    \mbb{P}(E_i)  = \mbb{P}(\vee_{j = 1}^{S} E_{i,j}) \leq \sum_{j = 1}^{S}  \mbb{P}(E_{i,j}).
\end{align*}
Clearly, $E_{i,j}$ happens iff for all $h \in [n(p_c - \epsilon)]$, $\mb{S}_j$'s $h$-th row  $\mb{s}_{j, h}$ has a unique address $a_h \in [n]$ and $\mb{s}_{j, h}$ equals $\mb{X}_i$'s $a_h$-th row $\mb{x}_{i, a_h}$.
However, even if $\mb{s}_{j, h}$ has a unique address $a_h \in [n]$, $\mb{s}_{j, h}$ only equals $\mb{x}_{i, a_h}$ with a vanished probability of $2^{-w}$ by Claim \ref{claim: 2w}.
Thus, $\mbb{P}(E_{i,j}) \leq 2^{-w n(p_c - \epsilon)}$, leading to
\begin{align}\label{eqn: FER Ei}
    \mbb{P}(E_i)  &\leq \sum_{j = 1}^{S}  \mbb{P}(E_{i,j}) \nonumber
    \\&\leq  S \cdot 2^{-w n(p_c - \epsilon)}\nonumber
    \\&\leq 2^{n-w n(p_c - \epsilon)}.
\end{align}

\color{black}

At this point, substituting \eqref{eqn: FER E1} and \eqref{eqn: FER Ei} into \eqref{eqn: FER} results in
\begin{align}
    \mbb{P}(\wh{\mb{U}} \neq \mb{U}_1)  &\leq  \mbb{P}(\bar{E}_1) + \sum_{i = 2}^{N} \mbb{P}(E_i) \nonumber
    \\&\leq e^{-2n\epsilon^2} + (N - 1) 2^{n-w n(p_c - \epsilon)} \nonumber
    \\&\leq e^{-2n\epsilon^2} + 2^{kw + n - w n(p_c - \epsilon)} \nonumber
    \\&=  e^{-2n\epsilon^2} + 2^{n(p_c - 2\epsilon)w + n - w n(p_c - \epsilon)} \nonumber
    \\&=  e^{-2n\epsilon^2} + 2^{n -  n \epsilon w} \nonumber
    \\&= e^{-2n\epsilon^2} + 2^{n - n \epsilon (\beta - 1)\log n} \nonumber
    \\&\to 0,
\end{align}
where the last step is because $\epsilon$ is fixed, $\beta > 1$, and $n \to \infty$.
This completes the proof.

\subsection{Proof of Lemma \ref{lemma: converse}}

Note that
\begin{equation}\label{eqn: I + H = H}
I(\mb{U}; \widehat{\mb{U}}) + H(\mb{U}|\widehat{\mb{U}}) = H(\mb{U}) = kw = nlR,
\end{equation}
where $I$ is the mutual information function and $H$ is the entropy function.
As $\mb{U}\rightarrow\mb{X}\rightarrow\mb{Z}\rightarrow\widehat{\mb{U}}$ forms a Markov chain, $I(\mb{U}; \widehat{\mb{U}}) \leq I(\mb{X};\mb{Z})$ by the data processing inequality \cite{cover1999elements}.
Meanwhile, according to Fano's inequality \cite{cover1999elements}, we have $H(\mb{U}|\widehat{\mb{U}}) \leq 1 + \mathbb{P}(\mb{U}\neq\widehat{\mb{U}}) \log_2(|\mc{U}|)$, where $\mc{U}$ is the set that $\mb{U}$ takes values from and $|\mc{U}| = 2^{nlR}$ in this paper.
Therefore, \eqref{eqn: I + H = H} becomes
 \begin{align}\label{eqn: Fano}
nlR & = I(\mb{U}; \widehat{\mb{U}}) + H(\mb{U}|\widehat{\mb{U}}) \nonumber\\
&\leq I(\mb{X};\mb{Z})+1+ \mathbb{P}(\mb{U}\neq\widehat{\mb{U}}) n l R,
\end{align}
Since we consider $\mathbb{P}(\mb{U}\neq\widehat{\mb{U}}) \to 0$ as $n \to \infty$ in \eqref{eqn: Fano}, to prove Lemma \ref{lemma: converse}, it suffices to show
\begin{equation}\label{eqn: IXZ}
I(\mb{X}; \mb{Z}) \leq (p_c + \delta) (1 - 1/\beta) nl + o(nl).
\end{equation}
To this end, we partition $\mb{Z}$ into two submatrices $\mb{S}$ and $\bar{\mb{S}}$ which consist of the rows corrupted by uniformly distributed random symbol substitution errors in $\mb{Z}$ and the remaining rows, respectively.
Note that the order of the rows in $\bar{\mb{S}}$ and $\mb{S}$ does not matter.
We have
\begin{equation}\label{eqn: I X S + S}
I(\mb{X}; \mb{Z}) = I(\mb{X}; \bar{\mb{S}}) + I(\mb{X}; \mb{S} | \bar{\mb{S}}).
\end{equation}

We first bound $I(\mb{X}; \bar{\mb{S}})$ in \eqref{eqn: I X S + S}.
Since $\bar{\mb{S}}$ only contains correctly received rows and erased rows, $\bar{\mb{S}}$ can be regarded as the output of the noise-free shuffling-sampling channel \cite{shomorony2021DNA}, where each row of $\mb{X}$ is either sampled once with probability $p_c$ or never sampled with probability $1 - p_c$.
Therefore, using a similar proof as \cite[Section III-B]{shomorony2021DNA} yields
\begin{equation}\label{eqn: I X S bar}
I(\mb{X}; \bar{\mb{S}}) \leq  H(\bar{\mb{S}}) \leq (p_c + \delta) (1 - 1/\beta) nl + o(nl).
\end{equation}

We now bound $I(\mb{X};\mb{S}|\bar{\mb{S}})$  in \eqref{eqn: I X S + S} based on
\begin{equation}\label{eqn: I X, S | S}
I(\mb{X};\mb{S}|\bar{\mb{S}}) = H(\mb{S}|\bar{\mb{S}})-H(\mb{S}|\mb{X},\bar{\mb{S}}).
\end{equation}
Denote $J$ as the number of rows in $\mb{S}$.
We have
\begin{align}\label{eqn: H S | S}
H(\mb{S}|\bar{\mb{S}}) &\overset{(i)}{=} H(\mb{S}|\bar{\mb{S}}, J)\nonumber\\
 &\leq H(\mb{S}|J)\nonumber\\
 &\overset{(ii)}{\leq} \sum_{j=1}^n \mbb{P} (J=j) jl \nonumber\\
 &=p_s n l,
\end{align}
where $(i)$ follows the fact that $J$ is specified by $\bar{\mb{S}}$, and
$(ii)$ holds since  given $J=j$, $\mb{S}$ consists of ${jl}$ bits.
Moreover, for any $i \in [J]$, suppose the $i$-th row $\mb{s}_i$ of $\mb{S}$ came from $\mb{x}_{f_i}$.
Let $\mb{f} \triangleq (f_1, f_2, \ldots, f_J)$.
We have
\begin{align}\label{eqn: H S | X S}
H(\mb{S}|\mb{X},\bar{\mb{S}}) &= H(\mb{S}|\mb{X},\bar{\mb{S}}, J)\nonumber\\
 &\geq H(\mb{S}|\mb{X},\bar{\mb{S}}, J, \mb{f})\nonumber\\
 &\overset{(iii)}{=} \sum_{j=1}^n \mbb{P} (J=j) \sum_{i = 1}^{j} H(\mb{s}_i | \mb{x}_{f_i})\nonumber\\
 &=p_s n \log_2(2^l-1),
\end{align}
where $(iii)$ holds because given $(\mb{X}, J, \mb{f})$, for any $i \in [J]$, the $i$-th row $\mb{s}_i$ of $\mb{S}$ only relates to $\mb{x}_{f_i}$, and  $\mb{s}_i$ has $2^{l}-1$ equiprobable choices according to \eqref{eqn: channel-1}.
By applying \eqref{eqn: H S | S} and \eqref{eqn: H S | X S} to \eqref{eqn: I X, S | S}, we have
\begin{equation}\label{eqn: I X, S | S onl}
I(\mb{X};\mb{S}|\bar{\mb{S}}) \leq -\log_2(1 - 2^{-l}) = o(1).
\end{equation}

Finally, by substituting \eqref{eqn: I X S bar} and \eqref{eqn: I X, S | S onl} into \eqref{eqn: I X S + S} yielding \eqref{eqn: IXZ}, the proof is completed.

\section{Practically Efficient Coding Schemes}\label{section: Decoding Scheme}

To achieve a good trade-off between efficiency and complexity, this section adopts the encoding scheme  in Fig. \ref{fig: encoding} with its first encoding step employing binary $(n, k)$ linear ECCs to independently encode each column of the source data matrix.
To perform decoding efficiently, we first derive the soft information $\mb{M} = (m_{i, j})_{1 \leq i \leq n, 1 \leq j \leq w}$ and hard information $\wh{\mb{M}} = (\wh{m}_{i, j})_{1 \leq i \leq n, 1 \leq j \leq w}$ of  each  bit in $\mb{V}$.
This allows us to decode each column of $\mb{V}$ independently, leading to the  independent decoding scheme.
Next, in view that it requires the successful decoding of all columns to fully recover $\mb{V}$ (as well as $\mb{U}$ and $\mb{X}$), we propose an enhanced joint decoding scheme, which measures  the reliability of rows of $\mb{Z}$ based on the independent decoding result  and takes the most reliable rows for a further step of decoding.

Given any $i \in [n]$ and $j \in [w]$, we begin with  the computation of the soft information $m_{i, j}$ of $v_{i,j} \in \mb{V}$.
Conventionally, we should define $m_{i, j} \triangleq \mathbb{P}(v_{i,j} = 0 \mid \mb{Z}) / \mathbb{P}(v_{i,j} = 1 \mid \mb{Z})$.
In fact, $m_{i, j}$ is only related to $\mb{y}_i$.
 But the problem is that we do not know $\mb{y}_i$ corresponds to which row in $\mb{Z}$ due to the random permutation of channel-2, making it hard to compute $m_{i, j}$ exactly.
Note from \eqref{eqn: channel-1} that, if $\mb{y}_i$ is erased or corrupted by uniformly distributed random symbol substitution errors such that its address is not $i$ any more, no (or at most negligible) information about $v_{i, j}$ can be inferred from  $\mb{y}_i$ even if we can figure out which row in $\mb{Z}$ corresponds to $\mb{y}_i$.
Therefore, it is reasonable to  reduce the computation of $m_{i, j}$ from $\mb{Z}$ to the rows  with address $i$.
Accordingly, we define
\begin{equation}\label{eqn: mij definition}
    m_{i, j} \triangleq \frac{\mathbb{P}(v_{i,j} = 0 \mid t, t_0) }{ \mathbb{P}(v_{i,j} = 1 \mid t, t_0)},
\end{equation}
where $t$ denotes  the number of rows in $\mb{Z}$ with address $i$, among which there are $t_0$ rows with the $j$-th bit being  $0$.
Based on \eqref{eqn: mij definition}, $m_{i,j}$ can be computed by  the following proposition.

\begin{proposition}\label{proposition: mij}

\begin{align}\label{eqn: mij}
m_{i,j} = &\big(2t_0(1 - q)(p_1 + p_4) + (n - t)q(p_2 + p_3) \nonumber
    \\&\quad  + 2(t - t_0)(1 - q)p_5 \big) \nonumber
    \\& \big/ \big( 2(t - t_0)(1 - q)(p_1 + p_4) \nonumber
    \\&\quad  + (n - t)q(p_2 + p_3) + 2t_0(1 - q)p_5 \big),
\end{align}
where
\begin{equation}\label{eqn: q}
    q \triangleq p_s 2^{l - a} / (2^l - 1)
\end{equation}
and
\begin{equation}\label{eqn: p1-p5}
    p_b \triangleq
    \begin{cases}
        p_c, &b = 1,
        \\p_e, &b = 2,
        \\p_s (2^l - 2^{l - a}) / (2^l - 1), &b = 3,
        \\p_s (2^{l - a - 1} - 1)/(2^l - 1), &b = 4,
        \\p_s 2^{l - a - 1}/(2^l - 1), &b = 5.
    \end{cases}
\end{equation}

\end{proposition}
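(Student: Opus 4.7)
\begin{IEEEproof}[Proof Proposal]
The plan is to compute $m_{i,j}$ via Bayes' rule under a uniform prior, so that $m_{i,j} = \mathbb{P}(t, t_0 \mid v_{i,j}=0)/\mathbb{P}(t, t_0 \mid v_{i,j}=1)$. To this end, I would first classify, for each $i'\in[n]$, the contribution of the row that originated from $\mb{x}_{i'}$ to the triple $(t_0, t-t_0, n-t)$ observed in $\mb{Z}$. The three mutually exclusive contribution types are: (A) produce a $\mb{Z}$-row with address $i$ and $j$-th bit $0$; (B) produce a $\mb{Z}$-row with address $i$ and $j$-th bit $1$; and (C) either be erased or end up with an address different from $i$. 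The random permutation of channel-2 is irrelevant to this count, so it suffices to work at the single-row level.

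Next I would identify the per-row contribution probabilities. For $i'\neq i$, since the substitution is uniform on $\mbb{F}_2^l\setminus\{\mb{x}_{i'}\}$ and $\mb{x}_{i'}$ itself has address $i'\neq i$, among the $2^{l-a}$ length-$l$ vectors carrying address $i$ exactly $2^{l-a-1}$ carry a $0$ in position $j$; hence type~(A) and type~(B) each occur with probability $q/2$ from \eqref{eqn: q}, and type (C) occurs with probability $1-q$. For $i'=i$ and (say) $v_{i,j}=0$, the five disjoint sub-events spelled out in \eqref{eqn: p1-p5} come from (1) correct reception, (2) erasure, (3) substitution with a changed address, (4) substitution preserving the address with $j$-th bit unchanged, and (5) substitution preserving the address with $j$-th bit flipped; these give type-(A) probability $p_1+p_4$, type-(B) probability $p_5$, and type-(C) probability $p_2+p_3$. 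The $v_{i,j}=1$ case is identical with (A) and (B) swapped.

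Conditioning on which of the three types is realized by row $i$ itself, the remaining $n-1$ rows are i.i.d.\ with the $(q/2, q/2, 1-q)$ split, so the conditional joint distribution of the residual counts is multinomial with total $n-1$. Summing over the three cases gives, for $v_{i,j}=0$,
\begin{align*}
\mathbb{P}(t,t_0\mid v_{i,j}=0) = &(p_1{+}p_4)\tbinom{n-1}{t_0-1,t-t_0,n-t}(q/2)^{t-1}(1-q)^{n-t} \\
&+ p_5\tbinom{n-1}{t_0,t-t_0-1,n-t}(q/2)^{t-1}(1-q)^{n-t} \\
&+ (p_2{+}p_3)\tbinom{n-1}{t_0,t-t_0,n-t-1}(q/2)^{t}(1-q)^{n-t-1},
\end{align*}
and an analogous expression for $v_{i,j}=1$ with $t_0$ and $t-t_0$ interchanged in the first two multinomial coefficients.

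Finally I would take the ratio. The factor $(q/2)^{t-1}(1-q)^{n-t}$ is common, and after pulling out the third multinomial coefficient $\binom{n-1}{t_0,t-t_0,n-t-1}$ as a common divisor one uses the identities $\binom{n-1}{t_0-1,t-t_0,n-t}/\binom{n-1}{t_0,t-t_0,n-t-1} = t_0/(n-t)$ and $\binom{n-1}{t_0,t-t_0-1,n-t}/\binom{n-1}{t_0,t-t_0,n-t-1} = (t-t_0)/(n-t)$. Multiplying numerator and denominator through by $2(n-t)(1-q)$ converts the ratio into the exact form stated in \eqref{eqn: mij}. The main obstacle is purely the bookkeeping in this last step; the modelling itself is routine once one recognizes that, conditional on $v_{i,j}$, the rows behave independently and that the address mismatch probability $q$ for $i'\neq i$ already absorbs all the combinatorics of the substitution channel \eqref{eqn: channel-1}.
\end{IEEEproof}
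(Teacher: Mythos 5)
Your proposal is correct and follows essentially the same route as the paper's proof: Bayes' rule with a uniform prior on $v_{i,j}$, a case split on the fate of the row originating from $\mb{x}_i$ (your types (A)/(B)/(C) correspond exactly to the paper's groupings $\{E_1,E_4\}$, $\{E_5\}$, $\{E_2,E_3\}$), a multinomial count over the remaining $n-1$ independent rows with per-row address-collision probability $q/2$ per bit value, and the same ratio-of-multinomial-coefficients simplification. The final bookkeeping, including the normalization by $2(n-t)(1-q)$, checks out against \eqref{eqn: mij}.
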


\begin{IEEEproof}
See Appendix \ref{appendix: proof of mij}.
\end{IEEEproof}

Based on Proposition \ref{proposition: mij}, it is natural to define the hard information of $v_{i,j}$ by:
\begin{equation}\label{eqn: hat mij mij}
    \wh{m}_{i, j} \triangleq
    \begin{cases}
        ?, & m_{i, j} = 1,\\
        0, & m_{i, j} > 1,\\
        1, & m_{i, j} < 1.
    \end{cases}
\end{equation}
The following corollary simplifies \eqref{eqn: hat mij mij}.

\begin{corollary}
\begin{equation}\label{eqn: hat mij t t0}
    \wh{m}_{i, j} =
    \begin{cases}
        ?, & t_0 = t/2,\\
        0, & t_0 > t/2,\\
        1, & t_0 < t/2.
    \end{cases}
\end{equation}
\end{corollary}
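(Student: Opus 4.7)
The plan is to prove the corollary by directly comparing the numerator $N$ and denominator $D$ of the expression for $m_{i,j}$ given in Proposition \ref{proposition: mij}, since the three cases in \eqref{eqn: hat mij mij} are determined purely by $\mr{sign}(N - D)$.

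First, I would write
\begin{equation*}
N - D = 2(1-q)\bigl[\,t_0(p_1+p_4) + (t-t_0)p_5 - (t-t_0)(p_1+p_4) - t_0 p_5\,\bigr],
\end{equation*}
since the middle term $(n-t)q(p_2+p_3)$ appears identically in both $N$ and $D$ and cancels. Grouping the remaining terms by $t_0$ and $t-t_0$ collapses this to
\begin{equation*}
N - D = 2(1-q)(2t_0 - t)\bigl[(p_1+p_4) - p_5\bigr].
\end{equation*}
So the corollary reduces to showing that the two scalar factors $(1-q)$ and $(p_1+p_4)-p_5$ are strictly positive; then $\mr{sign}(N-D) = \mr{sign}(2t_0 - t)$ and the three cases of \eqref{eqn: hat mij mij} translate directly into the three cases of \eqref{eqn: hat mij t t0}.

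For $1-q > 0$, I would plug in \eqref{eqn: q}: $q = p_s 2^{l-a}/(2^l-1) \leq 2^{l-a}/(2^l-1)$, and since $a = \lceil \log_2 n \rceil \geq 1$, this is at most $2^{l-1}/(2^l-1) < 1$ for all relevant $l$. For $(p_1+p_4)-p_5 > 0$, substituting from \eqref{eqn: p1-p5} gives
\begin{equation*}
p_1 + p_4 - p_5 = p_c + \frac{p_s(2^{l-a-1}-1)}{2^l-1} - \frac{p_s 2^{l-a-1}}{2^l-1} = p_c - \frac{p_s}{2^l - 1},
\end{equation*}
which is strictly positive by assumption \eqref{eqn: pc > ps}.

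The proof is essentially a one-line algebraic cancellation followed by a sign check, so there is no real obstacle; the only subtlety is recognizing that the positivity of $(p_1+p_4)-p_5$ is precisely what assumption \eqref{eqn: pc > ps} was introduced to guarantee, which nicely justifies why that condition was imposed back in Section \ref{section: system model}.
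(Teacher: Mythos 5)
Your proposal is correct and follows essentially the same route as the paper: cancel the common term $(n-t)q(p_2+p_3)$, reduce the comparison of $m_{i,j}$ with $1$ to the sign of $(2t_0-t)(p_1+p_4-p_5)$, and conclude via $p_1+p_4-p_5 = p_c - p_s/(2^l-1) > 0$ from \eqref{eqn: pc > ps}. The only difference is that you explicitly verify $1-q>0$, a positivity check the paper leaves implicit.
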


\begin{proof}
By Proposition \ref{proposition: mij}, we have
\begin{align}
m_{i, j} \geq 1 &\iff (2t_0 - t)(p_1 + p_4 - p_5) \geq 0 \nonumber
\\&\iff (2t_0 - t)(p_c - p_s/(2^l - 1)) \geq 0 \nonumber
\\&\iff 2t_0 - t \geq 0,
\end{align}
where the last step is due to  \eqref {eqn: pc > ps} and the equality holds iff $t_0 = t/2$.
\end{proof}

Eqn. \eqref{eqn: hat mij t t0} is quite intuitive since it coincides with the majority decoding result.
Hence, it somehow verifies the correctness of \eqref{eqn: mij}.
Given the soft information in \eqref{eqn: mij} or hard information in \eqref{eqn: hat mij t t0}, it is able to perform decoding independently for each column of $\mb{V}$, leading to the independent decoding scheme, as shown in Algorithm \ref{algorithm: independent}.
However, Algorithm \ref{algorithm: independent} can fully recover $\mb{V}$ (as well as $\mb{U}$) only if the decoding for each column succeeds.
This condition is generally too strong to achieve a desired decoding performance, as illustrated by the following example.

\begin{algorithm}[t!]
    \caption{Independent decoding scheme}
    \label{algorithm: independent}
    \begin{algorithmic}[1]
        \REQUIRE $\mb{Z} = [(\mb{z}_i^{\mr{T}})_{1 \leq i \leq n}]^{\mr{T}} = ({z}_{i,j})_{1 \leq i \leq n, 1 \leq j \leq l}$.
        \ENSURE $\wh{\mb{U}}$.
        \STATE Calculate the soft/hard information of $v_{i,j}, i \in[n], j \in [w]$ based on \eqref{eqn: mij}/\eqref{eqn: hat mij t t0}.
        \STATE  Decode each column of $\mb{V}$ independently and denote the result by $\tilde{\mb{V}} = (\tilde{v}_{i,j})_{1 \leq i \leq n, 1 \leq j \leq w}$, where $\tilde{v}_{i,j} \in \mbb{F}_2$ if decoding the $j$-th column of $\mb{V}$ gives a codeword and $\tilde{v}_{i,j} = ?$ otherwise.
        \STATE If $\tilde{\mb{V}}$ contains `?', set $\wh{\mb{U}} = ?$, otherwise retrieve $\wh{\mb{U}}$ from $\tilde{\mb{V}}$.
        \RETURN $\wh{\mb{U}}$.
    \end{algorithmic}
\end{algorithm}

\begin{example}\label{example: decoding}
Continued with Example \ref{example: encoding}.
Suppose the $\mb{X}$ in \eqref{eqn: X eg encoding} is transmitted over channel-1 and the output is
\begin{equation}\label{eqn: Y eg decoding}
    \mb{Y} =
    \left[
        \begin{array} {c c c c | c c c}
            0 & 0 & \ul{0} & \ul{0} & 0 & \ul{1} & \ul{0}\\
            0 & 1 & 0 & 1 & 0 & 1 & 0\\\hline
            \ul{1} & \ul{1} & 1 & 1 & 0 & 1 & 1\\
            0 & 1 & 1 & 0 & 1 & 0 & 0\\
            0 & 1 & 1 & 0 & 1 & 0 & 1\\
            0 & 1 & 0 & 1 & 1 & 1 & 0\\
        \end{array}
    \right],
\end{equation}
where the underlined bits are flipped by channel-1.
That is, $\mb{y}_1$ is corrupted by substitution errors and its address changes to $\mb{y}_2$'s address; $\mb{y}_3$ is also corrupted by substitution errors but its address does not change.
Assume $\mb{Z} = \mb{Y}$ for convenience (but the decoder does not know the correspondence between the rows of $\mb{Z}$ and $\mb{Y}$).

We compute the hard information of $\mb{V}$ by \eqref{eqn: hat mij t t0}, leading to
\begin{equation}\label{eqn: M eg decoding}
    \wh{\mb{M}} =
    \left[
        \begin{array} {c c c c}
            ? & ? & ? & ?\\
            0 & ? & 0 & ?\\\hline
            \tcr{\mb{1}} & \tcr{\mb{1}} & 1 & 1\\
            0 & 1 & 1 & 0\\
            0 & 1 & 1 & 0\\
            0 & 1 & 0 & 1\\
        \end{array}
    \right],
\end{equation}
where the bold red bits are different from the corresponding bits of $\mb{V}$ in \eqref{eqn: V eg encoding}.
For examples, $\wh{m}_{1,1} = ?$ since $\mb{Z}$ does not have a row with address 1; $\wh{m}_{2,2} = ?$ since $\mb{Z}$ has $t = 2$ rows with address 2 and their second bits differ from each other, i.e., $t_0 = 1$.
It is natural to decode each column of $\wh{\mb{M}}$ as the nearest codeword (in terms of Hamming distance) under the parity check matrix $\mb{H}$ in \eqref{eqn: H}, leading to
\begin{equation}\label{eqn: hat V eg decoding}
    \tilde{\mb{V}} =
    \left[
        \begin{array} {c c c c}
            0 & ? & 1 & 1\\
            0 & ? & 0 & 1\\\hline
            0 & ? & 1 & 1\\
            0 & ? & 1 & 0\\
            0 & ? & 1 & 0\\
            0 & ? & 0 & 1\\
        \end{array}
    \right],
\end{equation}
where the second column corresponds to a decoding failure since both $(0,1,0,1,1,1)^{\mr{T}}$ and $(1,0,1,1,1,0)^{\mr{T}}$ are the nearest codewords of the second column of $\wh{\mb{M}}$.
Thus, Algorithm \ref{algorithm: independent} fails to fully recover $\mb{V}$.
\end{example}

We find that in Example \ref{example: decoding}, all the known bits in $\tilde{\mb{V}}$ are correct.
In a general situation, the known bits in $\tilde{\mb{V}}$ should also be correct with a high probability.
We thus assume
\begin{equation}\label{eqn: prob hat vij = vij}
    \mbb{P}(\tilde{v}_{i, j} = v_{i,j}) > 1/2, \quad \forall \tilde{v}_{i, j} \in \tilde{\mb{V}} \text{~and~} \tilde{v}_{i, j}  \in \mbb{F}_2.
\end{equation}
Consequently, for a non-erased correct (resp. incorrect) row in $\mb{Z}$ with address $i \in [n]$, it generally has a relatively small (resp. large) distance from $\tilde{\mb{v}}_i$.
This provides a way to measure the reliability of rows of $\mb{Z}$.
We can then perform a further step of decoding over the most reliable rows of $\mb{Z}$ like under the erasure channel,  leading to the joint decoding scheme, as shown by Algorithm \ref{algorithm: joint}.
In the following, we give more explanations (including Example \ref{example: SGE for Hamming code}) for Algorithm \ref{algorithm: joint}.

\begin{algorithm}[t!]
    \caption{Joint decoding scheme}
    \label{algorithm: joint}
    \begin{algorithmic}[1]
        \REQUIRE $\mb{Z} = [(\mb{z}_i^{\mr{T}})_{1 \leq i \leq n}]^{\mr{T}} = ({z}_{i,j})_{1 \leq i \leq n, 1 \leq j \leq l}$.
        \ENSURE $\wh{\mb{U}}$.
        \STATE Execute Lines 1 and 2 of Algorithm \ref{algorithm: independent}.
        \STATE  For each $i \in [n]$, if $\mb{z}_i = ?$, set $d_i = w$; otherwise, set $d_i = |\{j \in [w]: z_{i, j} \neq \tilde{v}_{i',j}\}|$, where $i'$ is the address of $\mb{z}_i$.
        \STATE Rearrange $\mb{z}_i$ in ascending order with respect to $d_i, \forall i \in [n]$. $\%$As a result, $d_i \leq d_{i'}, \forall 1 \leq i < i' \leq n$, indicating that $\mb{z}_i$ has higher reliability than $\mb{z}_{i'}$.
        \STATE  Find the smallest $n' \in [n]$ such that by viewing $\mc{Z}_{n'} \triangleq \{\mb{z}_{i}: i \in [n']\}$ as correct and $\mb{Z} \setminus \mc{Z}_{n'} = \{\mb{z}_{i}: i \in [n] \setminus [n']\}$  as erased, the decoding   over $\mc{Z}_{n'}$ gives a unique estimation $\wh{\mb{U}}$ of $\mb{U}$. If no such an $n'$ exists, set $\wh{\mb{U}} = ?$.
        \RETURN $\wh{\mb{U}}$.
    \end{algorithmic}
\end{algorithm}

\black

\begin{itemize}
\item   Line 2 is to count the number $d_i$ of different positions (Hamming distance) between $\mb{z}_i$ and $\tilde{\mb{v}}_{i'}$, where $i'$ is the address of $\mb{z}_i$.
    According to \eqref{eqn: prob hat vij = vij}, larger $d_i$ implies a higher reliability of $\mb{z}_i$.
\item   Lines 3 and 4 are to take the most reliable rows $\mc{Z}_{n'}$ for a further step of decoding, where these rows are viewed as correct and the other rows of $\mb{Z}$ are viewed as erasure errors.
    Thus, the decoding actually works like under the erasure channel, which essentially is to solve linear systems when decoding linear ECCs.
    As our simulations are based on LDPC codes, the structured gaussian elimination (SGE) \cite{he2020disjoint, shokrollahi2005systems, odlyzko1984discrete}, also called inactivation decoding, is recommended since it  works quite efficiently for solving large sparse linear systems.
    The SGE leads to $\wh{\mb{U}} = \mb{U}$ if each row in $\mc{Z}_{n'}$ is correct and $n'$ is sufficiently large to uniquely determine $\wh{\mb{U}}$ (see Example \ref{example: SGE for Hamming code}).
\item   Algorithm \ref{algorithm: joint} definitely has a better chance to successfully recover $\mb{U}$ than Algorithm \ref{algorithm: independent}, since it succeeds if the latter succeeds and it may still succeed otherwise.
\end{itemize}

\begin{example}\label{example: SGE for Hamming code}
Continued with Example \ref{example: decoding}.
Recall that $\mb{Z} = \mb{Y}$ given by  \eqref{eqn: Y eg decoding} before executing Line 3 of  Algorithm \ref{algorithm: joint}.
As a result, Line 1 of  Algorithm \ref{algorithm: joint} obtains $\tilde{\mb{V}}$ in \eqref{eqn: hat V eg decoding}.
Then, Line 2 gives $(d_i)_{1 \leq i \leq n} = (2, 1, 2, 1, 1, 1)$.
Next, Line 3 rearranges the rows of $\mb{Z}$, say $\mb{Z} = [\mb{z}_1^{\mr{T}}, \mb{z}_2^{\mr{T}}, \mb{z}_3^{\mr{T}}, \mb{z}_4^{\mr{T}}, \mb{z}_5^{\mr{T}}, \mb{z}_6^{\mr{T}}]^{\mr{T}} = [\mb{y}_2^{\mr{T}}, \mb{y}_4^{\mr{T}}, \mb{y}_5^{\mr{T}}, \mb{y}_6^{\mr{T}}, \mb{y}_1^{\mr{T}}, \mb{y}_3^{\mr{T}}]^{\mr{T}}$, where $\mb{y}_i, i\in[6]$ is the $i$-th row of the $\mb{Y}$ in \eqref{eqn: Y eg decoding}.

In Line 4, for a given $n' \in [n]$, $\mc{Z}_{n'}$ consists of the first $n'$ rows in $\mb{Z}$.
Recall that $\mb{H}$ given by \eqref{eqn: H} is the parity-check matrix.
Our task is to find the smallest $n'$ such that there exists a unique $\wh{\mb{V}} = [(\wh{\mb{v}}_i^{\mr{T}})_{1 \leq i \leq n}]^{\mr{T}}$ (then $\wh{\mb{U}}$ is uniquely determined) satisfying  $\mb{H} \wh{\mb{V}} = \mb{0}$ and $\mc{Z}_{n'}$ is a part of $\wh{\mb{V}}$.
Here we say $\mc{Z}_{n'}$ is a part of $\wh{\mb{V}}$ if for any $i \in [n']$ such that $\mb{z}_i$  has a valid address $a_i \in [n]$, $\wh{\mb{v}}_{a_i}$ equals the data of $\mb{z}_i$.

More specifically, we first try $n' = 1$, yielding $\mc{Z}_{1} = \{\mb{z}_1\} = \{\mb{y}_2\}$, where $\mb{z}_1$ has address 2.
Thus, we needs to solve $\mb{H} \wh{\mb{V}} = \mb{0}$ with  $\wh{\mb{v}}_2$ being known and $\wh{\mb{v}}_i, i \in \{1, 3, 4, 5, 6\}$ being unknowns.
Since  the 1st, 3rd, 4th, 5th, and 6th columns of $\mb{H}$ are linearly dependent, these unknowns cannot be uniquely determined.
We need to increase $n'$.

We next try $n' = 2$, yielding $\mc{Z}_{2} = \{\mb{z}_1, \mb{z}_2\} = \{\mb{y}_2, \mb{y}_4\}$, where $\mb{z}_2$ has address 4.
We needs to solve $\mb{H} \wh{\mb{V}} = \mb{0}$ with  $\{\wh{\mb{v}}_2, \wh{\mb{v}}_4\}$ being known and $\wh{\mb{v}}_i, i \in \{1, 3, 5, 6\}$ being unknowns.
Since  the 1st, 3rd, 5th, and 6th columns of $\mb{H}$ are linearly independent, these unknowns can be uniquely determined.
Moreover, since $\wh{\mb{v}}_2$ and $\wh{\mb{v}}_4$ are correct, the decoding succeeds, i.e., $\mb{V}$ as well as $\mb{U}$ are fully recovered.
\end{example}

To end this section, we discuss the (computational) complexities of both Algorithms \ref{algorithm: independent} and \ref{algorithm: joint}.
For Algorithm \ref{algorithm: independent}, it decodes each column of $\mb{V}$ independently.
Thus, its complexity is $O(w c_1)$, where $w$ is the number of data columns and $c_1$ denotes the complexity for decoding one column.
The value of $c_1$ depends on the underlying ECC as well as the decoding algorithm.
On the other hand, Algorithm \ref{algorithm: joint} has complexities $O(w c_1)$, $O(w n)$, and  $O(n \log_2 n)$ for its Lines 1--3, respectively.
The complexity of Line 4 also depends on the underlying ECC and decoding algorithm.
However, since Line 4 is to correct erasures, its complexity is less than or equal to the complexity $O(w c_1)$ of Algorithm \ref{algorithm: independent}.
As a result, the total complexity of Algorithm \ref{algorithm: joint} is $O(w c_1 + w n + n \log_2 n)$, which generally equals $O(w c_1)$ since $c_1 \geq n$ must hold and $w \geq \log_2 n$ is required in practice to have a good code rate.
That is, Algorithms \ref{algorithm: independent} and \ref{algorithm: joint} have the same order of complexity.

For example, suppose the underlying ECC is an LDPC code which has a sparse parity-check matrix of a total number $\theta$ of non-zero entries.
Then, Algorithm \ref{algorithm: independent} has complexity $O(w c_1) = O(w \theta t_{max})$ under message-passing algorithms with a max number  $t_{max}$ of iterations.
Moreover, Line 4 of Algorithm  \ref{algorithm: joint} generally has complexity $O(w \theta)$ \cite{he2020disjoint} such that the total complexity of Algorithm  \ref{algorithm: joint} is  $O(w \theta t_{max}  + w n + n \log_2 n) = O(w \theta t_{max})$ in practical situations.
Our simulations confirm that Algorithm  \ref{algorithm: joint} is only slightly slower than Algorithm  \ref{algorithm: independent}.

\section{Simulation Results}\label{section: simulation}

In this section, we consider using binary LDPC code as the outer code since it works very well with soft information.
We evaluate the FERs of both the independent decoding scheme (Algorithm  \ref{algorithm: independent}) and joint decoding scheme (Algorithm  \ref{algorithm: joint}).
A frame error occurs when a test case does not fully recover the data matrix $\mb{U}$.
We collect at least $100$ frame errors at each simulated point.
The independent decoding scheme first computes the soft information by \eqref{eqn: mij}, and then adopts the belief propagation (BP) algorithm \cite{ECC04} with  a maximum of $100$ iterations to recover each column of $\mb{U}$.
Given the independent decoding  result, the joint decoding scheme further executes Lines 2--4 of Algorithm \ref{algorithm: joint} to recover  $\mb{U}$.
Since LDPC codes are characterized by their sparse parity-check matrices, the SGE\cite{he2020disjoint, shokrollahi2005systems, odlyzko1984discrete} is adopted for the decoding in Line 4 of Algorithm  \ref{algorithm: joint}, and the decoding process is  similar to that in Example \ref{example: SGE for Hamming code}.

\begin{figure}[!t]
\centering
\subfigure[$p_s \in \{0, 0.01, 0.05\}$.]{
    \includegraphics[scale = 0.58]{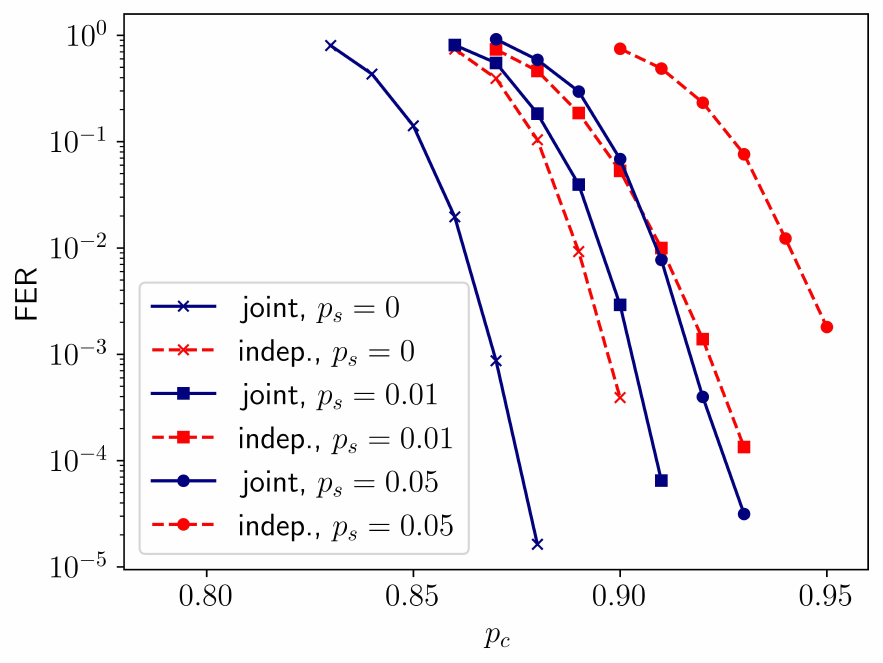}
}

\subfigure[$p_e \in \{0, 0.01, 0.05\}$.]{
    \includegraphics[scale = 0.58]{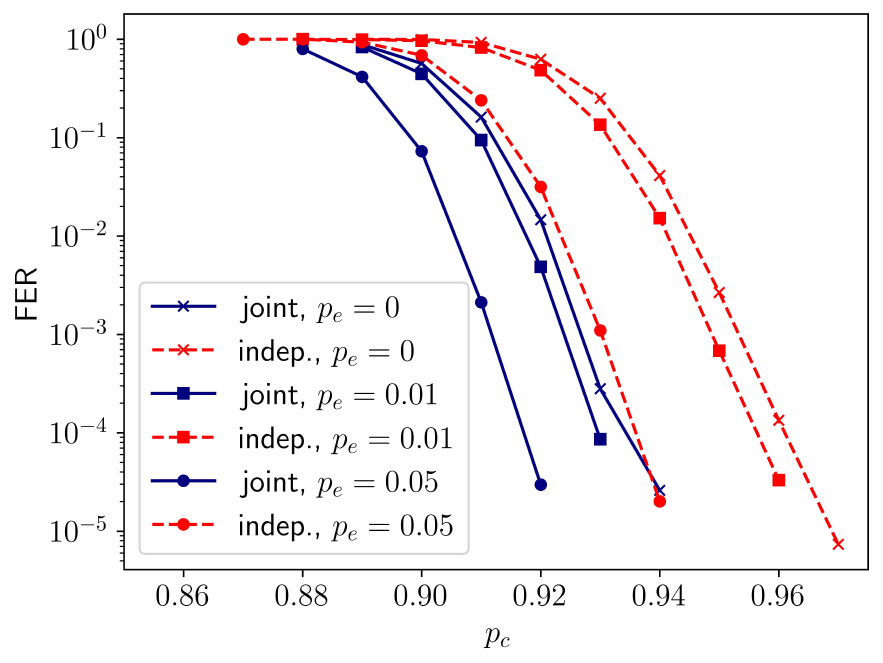}
}
\caption{For $l = 100$, FER performance of both the independent and joint decoding schemes for decoding the $(1296, 1080)$ LDPC code \cite{IEEESTD802_11n}.}
\label{fig: FER l100 ps pe}
\end{figure}

We first consider the $(n = 1296, k = 1080)$ LDPC code \cite{IEEESTD802_11n}.
For fixed  $l = 100$, the simulation results are shown in Fig. \ref{fig: FER l100 ps pe}.
We can see that:
\begin{itemize}
\item   For the same $(l, p_c, p_e, p_s)$, the joint decoding scheme always achieves lower FER  than the independent decoding scheme.
    The difference can exceed $3$ orders of magnitude.
\item   For the same $(l, p_c)$, a decoding scheme has decreasing FER  as $p_e$ increases (or equivalently as $p_s$ decreases).
    It implies that the uniformly distributed random symbol substitution errors are more harmful than the erasure errors, which is inconsistent with  Theorem \ref{theorem: capacity} which indicates that the two error types are of equal harm.
    We believe the main reason leading to this inconsistence is that, Theorem \ref{theorem: capacity} has been proven by using infinite-length random linear codes over $\mbb{F}_{2^w}$ with near ML decoding, while the simulations are performed on thousands-bit long binary LDPC codes with BP decoding.
    This is a common phenomenon in information theory caused by the difference between infinite-length random  coding and the finite-length coding, e.g., see \cite{polyanskiy2010channel}.
    Moreover, it seems very difficult or even impossible to prove Theorem \ref{theorem: capacity} for the case where a random binary linear code is used to  independently encode each column of $\mb{U}$, since in this case  Claim \ref{claim: 2w} is no longer true.
    More specifically, the probability in Claim \ref{claim: 2w} may be as large as $1/2$ when two source data matrices only differ from each other by one bit, making it hard to bound  $\mbb{P}(E_i)$ by a vanished probability in \eqref{eqn: FER Ei} such that the overall decoding error probability may not tend to zero.
\end{itemize}

\begin{figure}[!t]
\centering
\subfigure[$(1296, 1080)$ LDPC code and LT code.]{
    \includegraphics[scale = 0.58]{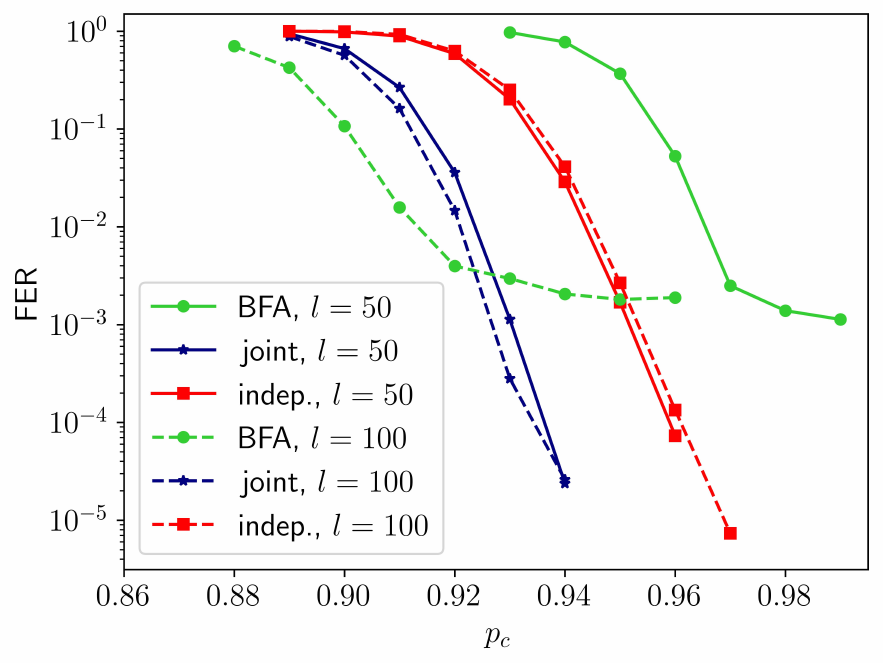}
}

\subfigure[$(2592, 2160)$ LDPC code and LT code.]{
    \includegraphics[scale = 0.58]{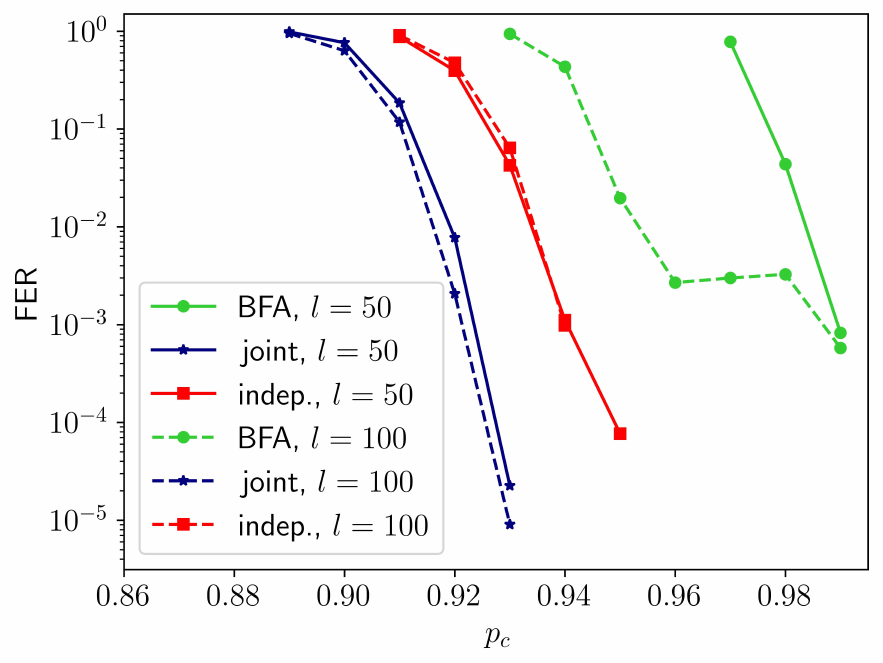}
}
\caption{For $p_e = 0$, FER performance of both the independent and joint decoding schemes for decoding LDPC codes, and FER performance of the BFA for decoding LT codes. }
\label{fig: FER 2592}
\end{figure}

Next, we fix $p_e = 0$ and vary $(n, l)$.
In Fig. \ref{fig: FER 2592}(a), the independent and joint decoding schemes are used for decoding the $(1296, 1080)$ LDPC code \cite{IEEESTD802_11n}.
Meanwhile, in Fig. \ref{fig: FER 2592}(b), they are used for decoding the $(2592, 2160)$ LDPC code, which is constructed by enlarging the lifting size of the $(1296, 1080)$ LDPC code from $54$ to $108$.
The FER of the BFA \cite{he2023basis}  for decoding the Luby Transform (LT) codes \cite{luby2002lt} is presented as a baseline, where the BFA takes the sorted-weight implementation.
For a fair comparison, the LT codes have code length $n = 1296$ and information length $k = 1080$ in Fig. \ref{fig: FER 2592}(a) and have $(n, k) = (2592, 2160)$ in Fig. \ref{fig: FER 2592}(b), and each LT symbol consists of  $a = \lceil \log_2 n \rceil$ address (seed) bits and  $w = l - a$ data bits.
The robust soliton distribution (RSD) with $(\delta, c) = (0.01, 0.02)$ \cite{he2023basis} is chosen to generate LT symbols.
From Fig. \ref{fig: FER 2592}, we can see that:
 \begin{itemize}

\item   For the same $(n, p_c, p_e, p_s)$, as $l$ changes from $50$ to $100$, FER of the independent decoding scheme slightly increases since it needs to correctly recover more columns to fully recover $\mb{U}$;
    FER of the joint decoding scheme obviously decreases since the independent decoding result can provide more information to better measure the reliability of rows of $\mb{Z}$;
    FER of the BFA significantly decreases except for the error floor region.
\item   For the same $(l, p_c, p_e, p_s)$, as $n$ changes from $1296$ to $2592$, FERs of both the independent and joint decoding schemes obviously decrease,  since longer LDPC codes lead to stronger error-correction capability;
    FER of the BFA significantly  increases except for the error floor region.
\end{itemize}

In summary, we should always choose the joint decoding scheme compared to the independent decoding scheme.
In addition, the joint decoding scheme and the BFA can outperform each other in different parameter regions.
Specifically, increasing $n$ and/or $l$ can obviously reduce the FER of the joint decoding scheme, since larger $n$ leads to stronger error-correction capability and larger $l$ can get more information from the independent decoding result.
However, the BFA is very sensitive to $l / n$.
According to \cite{he2023basis} as well as Fig. \ref{fig: FER 2592}, the BFA requires $l/n > p_s$ to have low FER.
Therefore, the joint decoding scheme can generally outperform the BFA for relatively large $n$ and small $l$, e.g., see Fig. \ref{fig: FER 2592}.

\section{Conclusions}\label{section: conclusion}

In this paper, we considered the outer channel  in Fig. \ref{fig: Channel model} for DNA-based data storage.
We first derived the capacity of the outer channel, as stated by Theorem \ref{theorem: capacity}.
It implies that simple index-based coding scheme is optimal and uniformly distributed random symbol substitution errors are only  as harmful as erasure errors.
Next,  we derived the soft and hard information of data bits given by \eqref{eqn: mij} and \eqref{eqn: hat mij t t0}, respectively.
These information was used to decode each column of $\mb{U}$ independently, leading to the independent decoding scheme.
Based on the independent decoding result, the reliability of each row of $\mb{Z}$ can be measured.
Selecting the most reliable rows to recover  $\mb{U}$, similar to the case under the erasure channel, leads to the joint decoding scheme.
Simulations showed that the joint decoding scheme can reduce the frame error rate (FER) by more than 3 orders of magnitude compared to the independent decoding scheme, and the joint decoding scheme and basis-finding algorithm (BFA) \cite{he2023basis} can outperform each other in different parameter regions.

\appendices
\section{Proof of Proposition \ref{proposition: mij}}\label{appendix: proof of mij}

Consider the transmission of $\mb{x}_i$ over channel-1.
$\mb{y}_i$ is the channel output.
We define the following five events:
\begin{itemize}
\item   $E_1$:  $\mb{x}_i$ is correctly transmitted, i.e., $\mb{y}_i = \mb{x}_i$.
\item   $E_2$:  $\mb{x}_i$ is erased, i.e., $\mb{y}_i = ?$.
\item   $E_3$:  $\mb{x}_i$ changes to $\mb{y}_i \in \mbb{F}_2^{l} \setminus \{\mb{x}_i\}$ with a different address.
\item   $E_4$:  $\mb{x}_i$ changes to $\mb{y}_i \in \mbb{F}_2^{l} \setminus \{\mb{x}_i\}$ with the same address and the same $j$-th bit.
\item   $E_5$:  $\mb{x}_i$ changes to $\mb{y}_i \in \mbb{F}_2^{l} \setminus \{\mb{x}_i\}$ with the same address and a different $j$-th bit.
\end{itemize}
It is easy to figure out that $\mbb{P}(E_b) = p_b$, where $b \in [5]$ and $p_b$ is given by \eqref{eqn: p1-p5}.
As a result, we have
\begin{align}\label{align: lr expression}
    m_{i,j} = &\frac{\mathbb{P}(v_{i,j} = 0 \mid t, t_0)}{\mathbb{P}(v_{i,j} = 1 \mid t, t_0)} \nonumber
            \\=&\frac{\mathbb{P}(t, t_0 \mid v_{i,j} = 0)}{\mathbb{P}(t, t_0 \mid v_{i,j} = 1)} \nonumber
            \\=&\frac{\sum_{b \in [5]}{\mathbb{P}(t, t_0 \mid v_{i,j} = 0, E_b})\mathbb{P}(E_b)} {\sum_{b \in [5]}{\mathbb{P}(t, t_0 \mid v_{i,j} = 1, E_b})\mathbb{P}(E_b)} \nonumber
            \\=&\frac{\sum_{b \in [5]}{\mathbb{P}(t, t_0 \mid v_{i,j} = 0, E_b})p_b} {\sum_{b \in [5]}{\mathbb{P}(t, t_0 \mid v_{i,j} = 1, E_b})p_b}.
\end{align}

To compute each $\mathbb{P}(t, t_0 \mid v_{i,j}, E_b)$ in \eqref{align: lr expression}, it needs the probability of that the address of $\mb{y}_{i'}$ is $i$ for a specific $i' \in [n] \setminus \{i\}$ (i.e., the address of $\mb{x}_{i'}$ changes into $i$ after transmission over channel-1).
According to \eqref{eqn: channel-1}, this probability is  $q$ given by \eqref{eqn: q}.
If we further require the $j$-th bit of $\mb{y}_{i'}$ being  $0$, the probability reduces to $q/2$.
For any $0 \leq t'_0 \leq t' < n$, we define the following event:
\begin{itemize}
\item   $E_{t', t'_0}$: the addresses of exact $t'$ rows in $\mb{Y} \setminus \{\mb{y}_i\}$ are $i$, and for exact $t'_0$ out of the $t'$ rows, the $j$-th bit equals to $0$.
\end{itemize}
According to \eqref{eqn: channel-1} and \eqref{eqn: q}, we have
\begin{align}\label{eqn: P(Ett0)}
   \mbb{P} (E_{t', t'_0}) &= \frac{(n-1)!}{t'_0!(t' - t'_0)!(n-1-t')!}\nonumber
   \\&\quad\quad \times  \left({q}/{2}\right)^{t'_0} \left({q}/{2}\right)^{t' - t'_0} (1 - q)^{n - 1 - t'}\nonumber
   \\&= \frac{(n-1)! \left({q}/{2}\right)^{t'} (1 - q)^{n - 1 - t'}}{t'_0!(t' - t'_0)!(n-1-t')! }.
\end{align}

At this point, we can use \eqref{eqn: P(Ett0)} to compute each $\mathbb{P}(t, t_0 \mid v_{i,j}, E_b)$ in \eqref{align: lr expression}.
Specifically, the following results hold:

\begin{align}\label{align: combination 1}
    \mathbb{P}(t, t_0 \mid v_{i,j} = 0, E_1) \nonumber
    = &\mathbb{P}(t, t_0 \mid v_{i,j} = 0, E_4) \nonumber
    \\ = &\mathbb{P}(t, t_0 \mid v_{i,j} = 1, E_5) \nonumber
    \\ = & \mbb{P}(E_{t - 1, t_0 - 1}),
\end{align}
\begin{align}\label{align: combination 2}
    \mathbb{P}(t, t_0 \mid v_{i,j} = 0, E_2)
    = &\mathbb{P}(t, t_0 \mid v_{i,j} = 1, E_2) \nonumber
    \\ = &\mathbb{P}(t, t_0 \mid v_{i,j} = 0, E_3) \nonumber
    \\ = &\mathbb{P}(t, t_0 \mid v_{i,j} = 1, E_3) \nonumber
    \\ = & \mbb{P}(E_{t, t_0}),
\end{align}

\begin{align}\label{align: combination 3}
    \mathbb{P}(t, t_0 \mid v_{i,j} = 0, E_5) \nonumber
     = &\mathbb{P}(t, t_0 \mid v_{i,j} = 1, E_1) \nonumber
    \\ = &\mathbb{P}(t, t_0 \mid v_{i,j} = 1, E_4) \nonumber
    \\ = & \mbb{P}(E_{t -1, t_0}).
\end{align}
Substituting \eqref{align: combination 1}--\eqref{align: combination 3} into \eqref{align: lr expression} leads to
\begin{align}\label{eqn: mij = P(E)}
m_{i,j} = &\big[\mbb{P}(E_{t-1, t_0-1})(p_1 + p_4) + \mbb{P}(E_{t, t_0})(p_2 + p_3) \nonumber
    \\&\quad  + \mbb{P}(E_{t-1, t_0})p_5 \big] \nonumber
    \\& \big/ \big[ \mbb{P}(E_{t-1, t_0})(p_1 + p_4) \nonumber
    \\&\quad  + \mbb{P}(E_{t, t_0})(p_2 + p_3) + \mbb{P}(E_{t-1, t_0-1}) p_5 \big].
\end{align}
By further substituting \eqref{eqn: P(Ett0)} into  \eqref{eqn: mij = P(E)}  and simplifying the result, the proof of \eqref{eqn: mij} (as well as Proposition \ref{proposition: mij}) is completed.

\bibliographystyle{IEEEtran}
\bibliography{myreference}

\end{document}